\documentclass[aps,prx,reprint,twocolumn,notitlepage,superscriptaddress,longbiblirography]{revtex4-1}

\usepackage{amsmath,amssymb}
\usepackage{ulem}
\usepackage{color}
\usepackage{bm}
\usepackage{bbm}
\usepackage{mathtools}
\usepackage{hyperref}
\usepackage{amsthm}
\usepackage{ulem}
\usepackage{CJK}
\theoremstyle{plain}
\newtheorem{thm}{Theorem}
\newtheorem*{thm*}{Theorem}
\newtheorem{lmm}{Lemma}
\newtheorem{cor}{Corollary}
\newtheorem{definition}{Definition}

\usepackage[utf8]{inputenc}

\usepackage{graphicx}
\usepackage[svgnames]{xcolor}
\usepackage{float, subcaption, tikz}
\captionsetup[figure]{justification=raggedright, singlelinecheck=false}
\captionsetup[subfigure]{justification=centering, singlelinecheck=false}
\captionsetup[table]{justification=raggedright, singlelinecheck=false}
\usepackage{tcolorbox}


\newcommand{\Dimerpara}{\vcenter{\hbox{\includegraphics[scale=0.07]{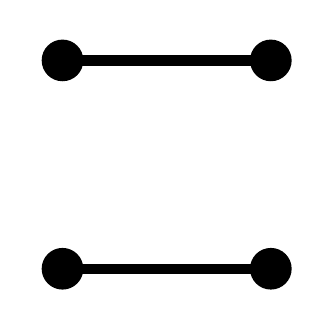}}}}
\newcommand{\Dimerperp}{\vcenter{\hbox{\includegraphics[scale=0.07]{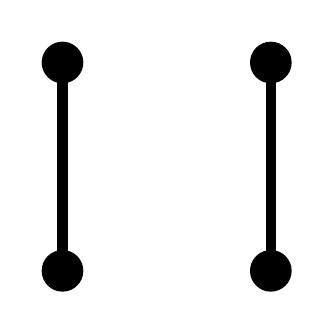}}}}

\let\Span\relax
\DeclareMathOperator{\Span}{Span}
\let\arccosh\relax
\DeclareMathOperator{\arccosh}{arccosh}
\let\gap\relax
\DeclareMathOperator{\gap}{gap}

\begin{document}
\title{
   Rigorous lower bound on dynamical exponents in gapless frustration-free systems
}
\author{Rintaro Masaoka}
\affiliation{Department of Applied Physics, The University of Tokyo, Tokyo 113-8656, Japan}
\author{Tomohiro Soejima (\begin{CJK*}{UTF8}{bsmi}副島智大\end{CJK*})}
\affiliation{Department of Physics, Harvard University, Cambridge, Massachusetts 02138, USA}
\author{Haruki Watanabe}\email{hwatanabe@g.ecc.u-tokyo.ac.jp}
\affiliation{Department of Applied Physics, The University of Tokyo, Tokyo 113-8656, Japan}
\date{\today}

\begin{abstract}
   This work rigorously establishes a universal lower bound $z\ge2$ for the dynamical exponent in frustration-free quantum many-body systems whose ground states exhibit power-law decaying correlations. The derivation relies on the Gosset-Huang inequality, providing a unified framework applicable across various lattice structures and spatial dimensions, independent of specific boundary conditions.
Remarkably, our result can be applied to prove new bounds for dynamics of classical stochastic processes.
Specifically, we utilize a well-established mapping from the time evolution of local Markov processes with detailed balance to that of frustration-free quantum Hamiltonians, known as Rokhsar-Kivelson Hamiltonians.
This proves $z \ge 2$ for such Markov processes, which is an improvement over existing bounds.
Beyond these applications, the quantum analysis of the $z\ge2$ bound is further broadened to include systems exhibiting hidden correlations, which may not be evident from purely local operators.
\end{abstract}

\maketitle

\section{Introduction}\label{sec: introduction}
The study of quantum many-body systems often relies on theoretically tractable models to gain fundamental insights into complex phenomena.
Frustration-free (FF) systems constitute a notable class of such models.
These systems are defined by Hamiltonians where the ground state simultaneously minimizes every local term of the Hamiltonian.
This property often simplifies the ground-state structure and facilitates theoretical analysis.
Historically, numerous FF models have been pivotal in advancing our understanding of novel quantum phases.
Prominent examples include the Affleck-Kennedy-Lieb-Tasaki model, which provided key insights into 1+1D symmetry-protected topological phases~\cite{affleckRigorousResultsValencebond1987}, and the Kitaev toric code, a solvable model for 2+1D topological order~\cite{kitaevFaulttolerantQuantumComputation2003}.
The remarkable ability of these fine-tuned FF systems to describe broader quantum phases is often attributed to the universality of these phases; specifically, the constraints imposed by frustration-freeness are presumed not to alter the universal properties of gapped quantum systems.

In contrast, gapless systems tell a different story.
Accumulating evidence indicates that the FF condition can profoundly influence the universal characteristics of gapless systems.
One of the most prominent features is a nontrivial dynamical exponent.
This exponent, often denoted $z$, is formally defined through the asymptotic behavior of the finite-size gap $\epsilon \sim L^{-z}$, where $L$ is the linear size of the system.
Many ordinary gapless systems, such as the critical point of the transverse field Ising model, or the Fermi liquid with finite chemical potential, exhibit $z=1$.
In contrast, all known gapless FF systems exhibit $z \ge 2$.
Paradigmatic Hamiltonians in this class include the ferromagnetic Heisenberg model and the quantum dimer model at the Rokhsar-Kivelson (RK) point, both of which are characterized by $z=2$.
However, a general proof of a bound $z \ge 2$ has been lacking.

Motivated by this observation, we prove $z \ge 2$ for FF Hamiltonians whose ground-state correlations decay no faster than a power law. Our result relies on a pivotal inequality in FF systems established by Gosset and Huang~\cite{gossetCorrelationLengthGap2016}, and holds irrespective of the choice of boundary conditions.
Our theorem applies to a wide class of gapless, FF systems (bosonic or fermionic) on arbitrary lattices in any spatial dimension.
Using our framework, we explicitly show that $z \ge 2$ holds for various representative gapless FF systems, including the generalized Rokhsar-Kivelson (RK) Hamiltonians at criticality, FF models exhibiting hidden correlations, and models with conserved quantities.
These findings suggest that the atypical dynamical exponents in gapless FF systems can be understood from a unified perspective based on ground-state correlation functions.

Our result puts an important constraint on the low-energy physics of FF systems. A large number of gapless systems arise as a quantum critical point between different phases. Many of these transitions are endowed with emergent relativistic conformal field symmetry. The symmetry includes Lorentz invariance, which forces $z=1$. Therefore, our bound shows FF systems cannot represent conformally invariant critical points, instead requiring us to understand non-relativistic critical points.

A notable consequence of this can be seen in translation-invariant systems. The dominant low-energy excitations in such systems often follow a dispersion relation of the form $E_k \sim |k|^z$.
The resulting quadratic or softer dispersion for $z \ge 2$ is crucial for enabling spontaneous $\mathrm{U}(1)$ symmetry breaking at zero temperature in 1+1 dimensions~\cite{watanabeCriticalSpontaneousBreaking2024}, which is a phenomenon forbidden in systems with $z=1$ under standard conditions for continuous symmetries~\cite{merminAbsenceFerromagnetismAntiferromagnetism1966,hohenbergExistenceLongRangeOrder1967}.

Remarkably, our result can also be used to prove a bound on \textit{classical} systems. We consider classical stochastic processes described by local Markov processes that satisfy the detailed balance condition. This includes a large class of important dynamics, such as the Glauber dynamics for the Ising model. A well-established mapping connects the transition-rate matrix of such processes to \textit{quantum} FF Hamiltonians called Rokhsar-Kivelson Hamiltonians. Our bound is, therefore, directly applicable to such Markov processes, providing a new bound on their dynamical exponents.

What is the consequence of a bound on $z$? The efficiency of Markov simulations can be measured by their autocorrelation time $\tau$, which quantifies the time it takes to be uncorrelated from a previous configuration. At a critical point, this time diverges as $\tau \propto L^z$, a phenomenon called critical slowdown.
Our new universal bound establishes a fundamental limit on the performance of such calculations.
More precisely, our results provide a no-go theorem: Under the assumptions of detailed balance and local state updates, one can never achieve $z<2$.
Standard methods such as Gibbs sampling (heat-bath) algorithms~\cite{gemanStochasticRelaxationGibbs1984} and the Metropolis-Hastings algorithms~\cite{metropolisEquationStateCalculations1953,hastingsMonteCarloSampling1970} belong to this class, and numerical studies indeed find $z\ge 2$ at criticality (see Table~\ref{table of dynamical exponent}).
There has been a long-standing effort to develop faster Markov chain Monte Carlo (MCMC) algorithms, focusing on nonlocal state updates~\cite{swendsenNonuniversalCriticalDynamics1987,wolffCollectiveMonteCarlo1989} or breaking the detailed balance condition~\cite{suwaMarkovChainMonte2010,turitsynIrreversibleMonteCarlo2011}.
Our no-go theorem provides a rigorous foundation for these works.

\begin{table}[t]
\centering
\begin{tabular}{ll}
   \hline \hline 
 Equilibrium states & Dynamical exponent $z$ \\ \hline
 Classical Ising (2D) & 2.1667(5)~\cite{nightingaleMonteCarloComputation2000}\\
 Classical Ising (3D) & 2.0245(15)~\cite{hasenbuschDynamicCriticalExponent2020} \\
 Classical Heisenberg (3D) & 2.033(5)~\cite{astilleroComputationDynamicCritical2019} \\
 Three-state Potts (2D) & 2.193(5)~\cite{muraseDynamicCriticalExponents2008} \\
 Four-state Potts (2D) & 2.296(5)~\cite{arashiroComparativeStudyDynamic2009}\\
 \hline \hline 
\end{tabular}
\caption{
 Numerical values of dynamical exponents for various critical points using the Gibbs sampling algorithms or the Metropolis-Hastings algorithm.
 For all examples in this list, $z \ge 2$ holds.
}
\label{table of dynamical exponent}
\end{table}

Let us now comment on the connection to existing literature.
The bound $z \ge 2$ was conjectured by Isakov et al. in Ref.~\cite{isakovDynamicsConformalQuantum2011} in the context of conformal quantum critical points, which form a subclass of gapless FF systems.
There are numerical confirmations of the lower bound $z \ge 2$ in many examples, such as the RK point of the quantum dimer model~\cite{rokhsarSuperconductivityQuantumHardCore1988,henleyRelaxationTimeDimer1997,ardonneTopologicalOrderConformal2004}, ferromagnetic Heisenberg models~\cite{heisenbergZurTheorieFerromagnetismus1928,blochZurTheorieFerromagnetismus1930}, and many other models~\cite{isakovDynamicsConformalQuantum2011, chenGaplessQuantumSpin2017,kumarMulticriticalTopologicalTransition2021,watanabeCriticalSpontaneousBreaking2024,saitoExactMatrixProduct2024}.
The bound $z \ge 2$ can be shown for FF systems with Nambu-Goldstone modes using a variational state, and this result has been generalized to systems with quasi-Nambu-Goldstone modes~\cite{ogunnaikeUnifyingEmergentHydrodynamics2023,renQuasiNambuGoldstoneModesManybody2024}.
It has also been observed numerically that the dynamical exponent changes from $z\ge2$ to $z=1$ when a perturbation violates the FF property~\cite{isakovDynamicsConformalQuantum2011,kumarMulticriticalTopologicalTransition2021,watanabeCriticalSpontaneousBreaking2024}.

Partial proofs of $z \ge 2$ focusing on the models under open boundary conditions without boundary Hamiltonian terms have been obtained by establishing upper bounds on the spectral gap of FF Hamiltonians~\cite{gossetLocalGapThreshold2016, anshuImprovedLocalSpectral2020,lemmQuantitativelyImprovedFinitesize2022,lemmCriticalFinitesizeGap2024,knabeEnergyGapsElementary1988}.
Let us briefly outline the core argument.
First, consider a FF Hamiltonian $\hat{H}^{(L)} = \sum_{i \in \Lambda_L} \hat{H}_i$, defined on a lattice $\Lambda_L$ of linear size $L$.
We assume its spectral gap, $\text{gap}(\hat{H}^{(L)})$, vanishes in the thermodynamic limit $L \to \infty$.
Next, consider the Hamiltonian $\hat{H}^{(l)} = \sum_{i \in \Lambda_l \subset \Lambda_L} \hat{H}_i$, which is restricted to a subregion $\Lambda_l$ with linear size $l$.
In the limit $L \to \infty$, it has been shown that the gap of $\hat{H}^{(l)}$ is bounded as
\begin{equation}
    \text{gap}(\hat{H}^{(l)}) \le \frac{A}{l^2},
\end{equation}
where $A$ is a positive constant, which directly implies that  $z \ge 2$.
This type of inequality was first derived by Gosset and Mozgunov~\cite{gossetLocalGapThreshold2016}, improving the local gap threshold by Knabe~\cite{knabeEnergyGapsElementary1988}.
Since then, several variations and refinements of this inequality have been developed in subsequent studies~\cite{anshuImprovedLocalSpectral2020,lemmQuantitativelyImprovedFinitesize2022,lemmCriticalFinitesizeGap2024}.

While these results offer valuable insights and are widely applicable to FF models, their reliance on open boundary conditions (without boundary terms) presents a crucial limitation.
This specific setup is a consequence of requiring $\hat{H}^{(l)}$ to be a part of a larger Hamiltonian $\hat{H}^{(L)}$.
Consequently, such arguments cannot distinguish between spectral gaps arising from bulk excitations and those originating from edge modes~\cite{watanabeCriticalSpontaneousBreaking2024}.
This distinction is crucial because the dynamical exponent $z$ can, in principle, depend on the boundary conditions.
Indeed, there are known examples of quantum systems where $z$ varies with the choice of boundary conditions~\cite{chenGaplessQuantumSpin2017,verresenGaplessTopologicalPhases2021}.
Our proof circumvents this challenge, since it is applicable to periodic boundary conditions.

The rest of the paper is organized as follows.
In Sec.~\ref{sec: general framework}, we introduce a general framework for proving $z \ge 2$ in gapless FF systems, based on the Gosset-Huang inequality.
In Sec.~\ref{sec: RK}, we apply this framework to a specific class of FF models constructed from classical statistical mechanics, known as RK Hamiltonians.
We provide a rigorous proof that $z \ge 2$ holds for RK Hamiltonians of classical critical points, which are also known as conformal quantum critical points.
Furthermore, leveraging the correspondence between RK Hamiltonians and Markov processes, we elucidate the implications of the established bound $z \ge 2$ in classical nonequilibrium statistical mechanics and computational physics.
In Sec.~\ref{sec: hidden}, we discuss an extension of the Gosset-Huang inequality to handle not only excitations generated by local operators but also arbitrary excitations with localized energy, even if they are not generated by local operators.
We discuss examples of hidden correlations arising from nonlocal excitations but with localized energy density.
We show that even FF systems with product ground states can exhibit a power-law decay of correlation functions with system size and satisfy $z \ge 2$.
Finally, in Sec.~\ref{sec: discussion}, we summarize our findings and the implications of the established bound.
In the appendix, we give detailed explanations that we omit in the main text.

\section{$z \ge 2$ for FF systems with power-law decaying correlation functions}
\label{sec: general framework}
In this section, we prove our main theorem, which establishes the lower bound $z\ge2$ on the dynamical exponent for critical gapless FF systems.
The proof relies on an inequality by Gosset and Huang~\cite{gossetCorrelationLengthGap2016}, which we formulate carefully below.

\subsection{Definitions}

In this work, we study quantum systems on a $d$-dimensional lattice $\Lambda$ which is embedded in the $d$-dimensional Euclidean space or torus.
The Hilbert space on $\Lambda$ is given by
\begin{align}
 \mathcal{H} = \bigotimes_{x \in \Lambda} \mathcal{H}_{x},
\end{align}
where $\mathcal{H}_x$ is a finite-dimensional local Hilbert space.
Although the tensor product structure may no longer hold when considering subspaces that satisfy constraint conditions, subsequent discussions remain valid even in such cases.

The Hamiltonian is a Hermitian operator acting on this Hilbert space.
Without loss of generality, we assume that the Hamiltonians are positive semidefinite, and their lowest eigenvalue is normalized to zero.

\begin{definition}[Frustration-free Hamiltonians]
 A Hamiltonian $\hat H$ is FF if it can be decomposed as
    \begin{align}
        \hat H = \sum_{i \in \mathcal{S}} \hat H_i,
    \end{align}
 where each term $\hat H_i$ is positive semidefinite and there exists a ground state $|\Psi\rangle$ satisfying  $\hat H_i |\Psi\rangle = 0$ for all $i \in \mathcal{S}$ (which implies that all ground states satisfy this condition).
\end{definition}

However, to obtain a meaningful class of Hamiltonians, the decomposition must be restricted because any Hamiltonian can be made trivially frustration-free via the decomposition $\hat H = \hat H$.
In this paper, we assume the strict locality of Hamiltonians.
In other words, we assume that each local term $\hat H_i$ is associated with a spatial position $\bm{x}_i$, and two terms $\hat H_i$ and $\hat H_j$ commute with each other if two positions $\bm{x}_i$ and $\bm{x}_j$ are sufficiently distant.

We now introduce the dynamical exponent $z$.
We use the definition of $z$ based on the spectral gap of the Hamiltonian and the system size.

\begin{definition}[Spectral gap]
 For a positive semidefinite operator  
   $\hat H$ with its lowest eigenvalue normalized to zero, the spectral gap, $\gap(\hat H)$, is its smallest nonzero eigenvalue.
\end{definition}

If the spectral gap of an FF Hamiltonian converges to zero in the large volume limit, we say that this Hamiltonian is gapless.

Note that this definition considers only the first excited state, differing from the more common definition of gapless systems, which requires a dense spectrum of excitations near the ground-state energy.
To our knowledge, no examples of local FF systems exist where these two definitions yield different outcomes~\cite{masaokaQuadraticDispersionRelations2024}.

For gapless FF systems, we define the dynamical exponent as follows.

\begin{definition}[dynamical exponent]
 Let $L$ be the linear size of the lattice measured in its embedding in Euclidean space or a torus.
 The dynamical exponent $z$ is defined by the relation
    \begin{align}
        \gap (\hat H_L) \sim L^{-z},
    \end{align}
\label{def:dynamic_critical_exponent}
where $\hat H_L$ is the Hamiltonian defined on a lattice of linear size $L$.
\end{definition}
We later define the system size more precisely in terms of the interaction graph.
While logarithmic corrections to this scaling relation can occur in general, we neglect them in our definition of $z$.

In translationally invariant gapless systems, the dynamical exponent is related to the dispersion relation $\omega(k) \sim k^z$ where $\omega(k)$ is the energy of an excitation with momentum $k$, provided these excitations are the dominant low-energy eigenstates.
From a renormalization-group perspective, a fixed point with dynamical exponent $z$ is invariant under the Lifshitz scale transformation $\bm{x} \to \lambda\bm{x}$, $t \to \lambda^z t$.

\subsection{Gosset-Huang inequality}\label{sec: GH inequality}

The lower bound $z \ge 2$ can be established without open boundary conditions by using the Gosset-Huang inequality~\cite{gossetCorrelationLengthGap2016}.
 We present a refined version of this inequality with an improved exponential coefficient, with the proof provided in Appendix~\ref{sec: proof of GH inequality}. A further generalization that removes unnecessary locality assumptions is discussed in Sec.~\ref{sec: hidden}.

\subsubsection{Interaction graph}
For the formulation and the proof of the Gosset-Huang inequality, it is useful to define the interaction graph.
The interaction graph for a Hamiltonian $\hat H = \sum_{i \in \mathcal{S}} \hat H_i$ has $|\mathcal{S}|$ vertices labeled by $i \in \mathcal{S}$ and there is an edge between vertex $i$ and vertex $j$ when $[\hat H_i,\hat H_j] \ne 0$.

When all local terms commute, i.e., for a stabilizer Hamiltonian with
$\bigl[\hat H_i,\hat H_j\bigr]=0$ for every $i,j\in\mathcal{S}$, the interaction graph defined above has no edges. 
Since these systems are always gapped, we ignore them in the following analysis.

An example of the interaction graph is shown in Fig.~\ref{fig: original and interaction graph}.
\begin{figure}[t]
    \centering
    \begin{minipage}{0.4\hsize}
        \centering
        \includegraphics[width=0.7\hsize]{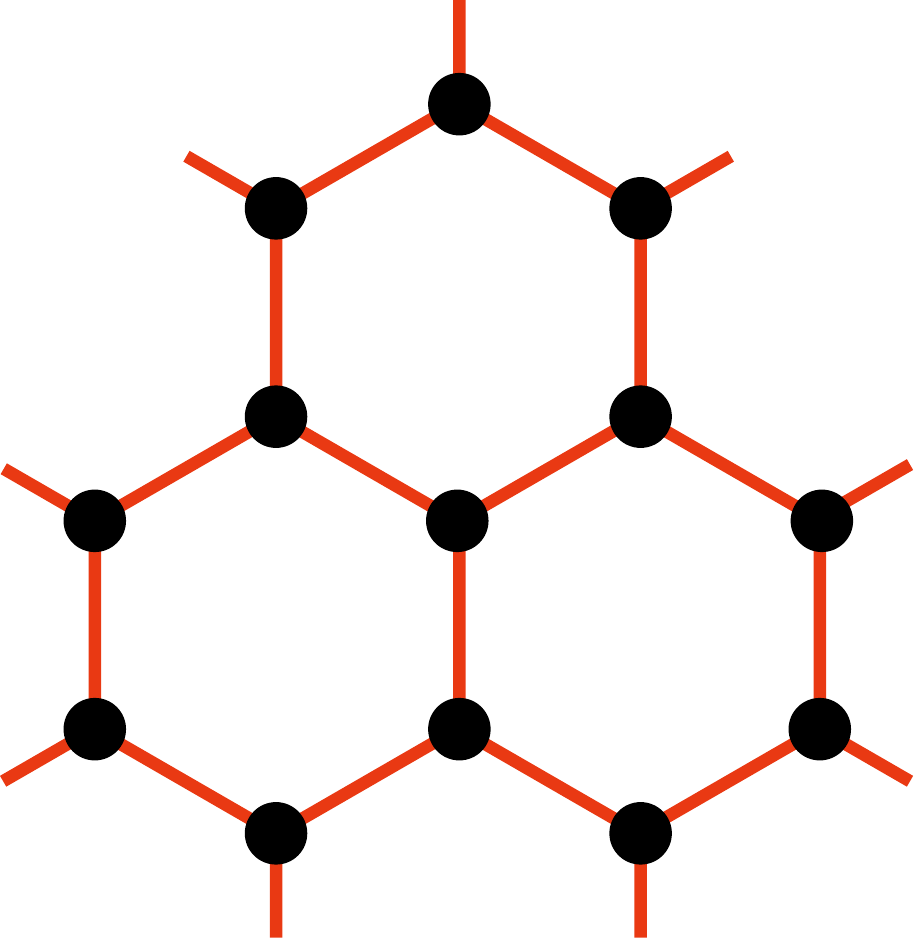}
        \subcaption{\label{fig: original graph}}
    \end{minipage}
    \begin{minipage}{0.4\hsize}
        \centering
        \includegraphics[width=0.7\hsize]{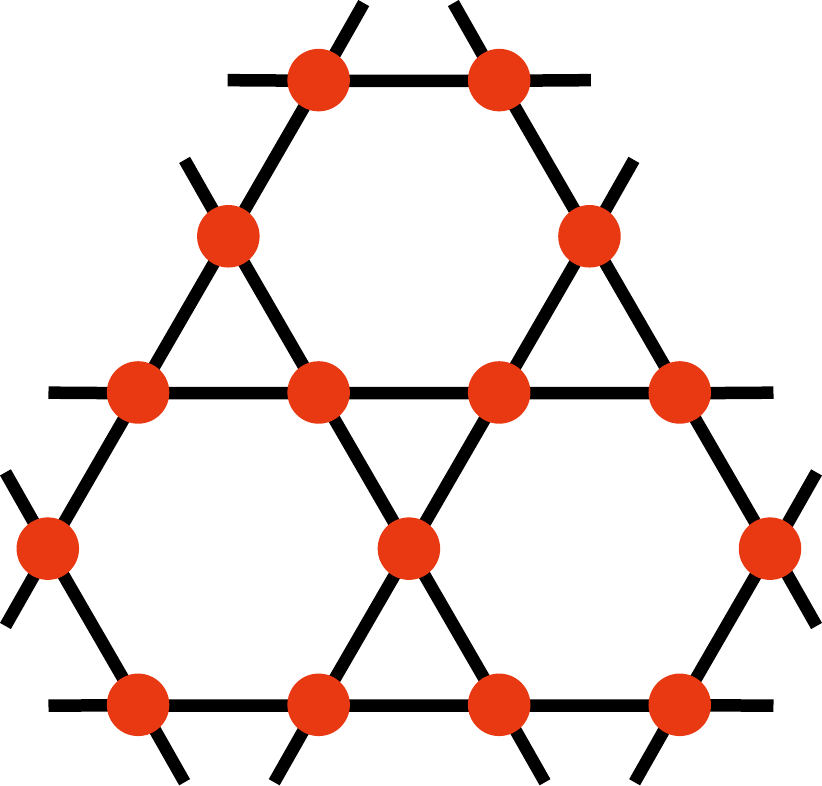}
        \subcaption{\label{fig: interaction graph}}
    \end{minipage}
    \caption{
      An example of the interaction graph. (a) The physical lattice (a honeycomb lattice) on which a Hamiltonian is defined.
      Black vertices represent spins, and red edges represent nearest-neighbor interactions.
      (b) The interaction graph of the Hamiltonian, where red vertices correspond to the local terms of the Hamiltonian, and black edges represent nonzero commutators between two local terms. Here, we assume that interaction terms sharing a spin do not commute.
    }
   \label{fig: original and interaction graph} 
\end{figure}
If two vertices $i$ and $j$ share an edge, we say that $i$ is adjacent to $j$.
The degree $g_i$ of a vertex $i$ is given by the number of vertices adjacent to $i$:
\begin{align}
 g_i \coloneqq |\{j \in \mathcal{S} \mid [\hat H_j, \hat H_i] \ne 0\}|.
\end{align}
The maximum degree $g$ of the graph is defined by
\begin{align}
 g \coloneqq\max_{i \in \mathcal{S}} g_i.
\end{align}
Another integer that characterizes the interaction graph is the chromatic number $c$.
It is the smallest number of colors needed for coloring the vertices so that no two adjacent vertices have the same color.
For example, the bipartite graph has $c = 2$ by definition, and all planar graphs satisfy $c \le 4$ from the four-color theorem.
The chromatic number $c$ is finite if the maximum degree $g$ is finite, since the greedy coloring algorithm ensures $c \le g+1$.

We can naturally define a distance $d(i, j)$ between two local terms $\hat H_i$ and $\hat H_j$ of the Hamiltonian by the graph theoretical distance, which is the minimum length of a path connecting two vertices $i$ and $j$ on the graph. 
If no path connects two vertices $i$ and $j$, we formally set $d(i, j) = \infty$.

We further define the distance between two operators $\hat{\mathcal{O}}$ and $\hat{\mathcal{O}}'$ by
\begin{align}
 D(\hat{\mathcal{O}}, \hat{\mathcal{O}}') \coloneqq \min \{d(i, j) \mid [\hat{\mathcal{O}}, \hat H_i] \ne 0, [\hat{\mathcal{O}}', \hat H_j] \ne 0 \}.
\end{align}
Strictly speaking, $D(\hat{\mathcal{O}}, \hat{\mathcal{O}}')$ is not a distance in the mathematical sense, since it does not satisfy the triangle inequality, although $d(i, j)$ does.

 Note that we formally set $D(\hat{\mathcal{O}}, \hat{\mathcal{O}}') = \infty$ when $\hat{\mathcal{O}}$ commutes with all local terms or no path connects $\hat{\mathcal{O}}$ and $\hat{\mathcal{O}}'$ in the interaction graph.

For a local Hamiltonian with a connected interaction graph, we can define the linear size of the system by $L =\max_{i,j} d(i, j)$, which is proportional to the linear size measured in the embedded space.

\subsubsection{Formulation of the Gosset-Huang inequality}

Let $\hat H$ be an FF Hamiltonian with a decomposition $\hat H = \sum_i \hat H_i$.
We define the dimensionless spectral gap by
\begin{align}
 \epsilon \coloneqq \frac{\gap(\hat H)}{\max_i \|\hat H_i\|},
    \label{dimless gap}
\end{align}
where $\|\cdot\|$ is the operator norm.
Then, the following inequality holds~\cite{gossetCorrelationLengthGap2016}:
\begin{align}&
 \frac{|\langle\Psi|\hat{\mathcal{O}}(\hat{\mathbbm{1}}-\hat{G})\hat{\mathcal{O}}'|\Psi\rangle|}{\|\hat{\mathcal{O}}^\dagger|\Psi\rangle\| \|\hat{\mathcal{O}}'|\Psi\rangle\|} \nonumber\\
   &
 \le 2\exp\left(-\left(\frac{D(\hat{\mathcal{O}}, \hat{\mathcal{O}}')-1}{c-1}-2\right)\sqrt{\frac{\epsilon}{g^2+\epsilon}}\right),
   \label{original GH inequality}
\end{align}
where $g$ is the maximum degree of the interaction graph and $c$ is the chromatic number of the interaction graph.
We call this inequality the Gosset-Huang inequality.
This inequality shows that the correlation function on the left-hand side is bounded by an exponential in the graph distance $D(\hat{\mathcal{O}},\hat{\mathcal{O}}')$.
If the ground state is unique, the left-hand side of Eq.~\eqref{original GH inequality} is the connected correlation function, where the ground-state projector is given by $\hat{G} = | \Psi \rangle \langle \Psi |$.
We also note that the left-hand side vanishes if $D(\hat{\mathcal{O}},\hat{\mathcal{O}'}) = \infty$. 

It is illuminating to compare the Gosset-Huang inequality with the bound
\begin{align}
      \frac{|\langle\Psi|\hat{\mathcal{O}}(\hat{\mathbbm{1}} - \hat{G})\hat{\mathcal{O}}'|\Psi\rangle|}{\|\hat{\mathcal{O}}\| \|\hat{\mathcal{O}}'\|}
 \le Ce^{-D(\hat{\mathcal{O}}, \hat{\mathcal{O}}')\epsilon/2v},
      \label{Hastings inequality}
\end{align}
for positive constants $C$ and $v$,
which was derived in Ref.~\cite{hastingsLocalityQuantumMarkov2004} using the Lieb-Robinson bound~\cite{liebFiniteGroupVelocity1972}.
This inequality applies to generic (not necessarily FF) quantum systems.
The key difference between the two bounds is that Eq.~\eqref{original GH inequality} involves $\sqrt{\epsilon}$, whereas Eq.~\eqref{Hastings inequality} is linear in the gap $\epsilon$ in the exponent.
This square-root scaling is essential for establishing the lower bound $z \ge 2$ in gapless FF systems.

\subsubsection{A heuristic proof of $z\ge 2$}

The Gosset-Huang inequality~\cite{gossetCorrelationLengthGap2016} directly implies a relationship $\xi = O(1/\sqrt{\epsilon})$ between the correlation length $\xi$ and the spectral gap $\epsilon$ for FF Hamiltonians.
The correlation length $\xi$ is defined via the exponential decay of correlations:
\begin{align}
 \frac{|\langle\Psi|\hat{\mathcal{O}}(\hat{\mathbbm{1}}-\hat G)\hat{\mathcal{O}'}|\Psi\rangle|}{\|\hat{\mathcal{O}}^\dagger|\Psi\rangle\| \|\hat{\mathcal{O}}'|\Psi\rangle\|} \approx \mathrm{const} \times {{e}}^{-D(\hat{\mathcal{O}}, \hat{\mathcal{O}}')/\xi},
\end{align}
where $D(\hat{\mathcal{O}}, \hat{\mathcal{O}}')$ is the distance between the two operators.
This relation suggests $z \ge 2$, if the dynamical exponent $z$ is defined in the thermodynamic limit through the scaling relation
\begin{align}
 \epsilon(J) \sim \xi(J)^{-z},
\end{align}
where $\epsilon(J)$ is the spectral gap, $\xi(J)$ is the ground-state correlation length, and $J$ is a parameter approaching a gapless critical point $J_c$.
However, this argument relies on the existence of a one-parameter family of Hamiltonians, $\hat H(J)$, that remains frustration-free across the entire path toward the gapless point $J_c$.
To make the argument fully rigorous, one must demonstrate that such a one-parameter family exists and that the critical exponent is independent of the particular path taken toward the gapless point.
To avoid these subtle issues, we establish $z\ge 2$ using the definition of the dynamical exponent based on the finite-size scaling of the spectral gap (Definition \ref{def:dynamic_critical_exponent}), which is determined solely at the gapless point.

\subsection{A proof of $z \ge 2$ for FF systems with algebraic correlation functions}
\label{sec: proof of bound}

Let us now prove our main theorem. Our theorem establishes the lower bound $z\ge2$ under specific conditions related to long-range correlations in a ground state.

Let $\hat H = \sum_i \hat{H}_i$ be an FF Hamiltonian defined on any lattice and let $| \Psi \rangle$ denote a ground state.
We define the linear size $L$ of the system by $L = \max_{i,j}d(i,j)$.

\begin{thm} \label{thm: main}
Assume there exist local operators $\hat{\mathcal{O}}$ and $\hat{\mathcal{O}}'$ such that their graph distance satisfies $D(\hat{\mathcal{O}}, \hat{\mathcal{O}}') \ge AL$ for some constant $A>0$, and
 \begin{align}
 \frac{|\langle\Psi|\hat{\mathcal{O}}(\hat{\mathbbm{1}}-\hat{G})\hat{\mathcal{O}}'|\Psi\rangle|}{\|\hat{\mathcal{O}}^\dagger|\Psi\rangle\| \|\hat{\mathcal{O}}'|\Psi\rangle\|} \ge BL^{-\alpha},
 \label{assumption of main thm}
 \end{align}
 where  $\hat G$ denotes the projector onto the ground space of $\hat H$, and $\alpha \ge 0$ and $B > 0$ are constants independent of $L$.
 Then, for sufficiently large $L$, the spectral gap $\epsilon_L$ of the Hamiltonian satisfies
\begin{align}
      \epsilon_L \le  2\left(\frac{g(c-1)(\alpha+1)}{A}\right)^2\frac{(\ln L)^2}{L^2}.
\end{align}
Consequently, the dynamical exponent obeys the lower bound $z \ge 2$.
\end{thm}

 In our argument, the ground-state correlation function may contain two distinct contributions given by
   \begin{align}
      \frac{|\langle\Psi|\hat{\mathcal{O}}(\hat{\mathbbm{1}}-\hat{G})\hat{\mathcal{O}}'|\Psi\rangle|}{\|\hat{\mathcal{O}}^\dagger|\Psi\rangle\| \|\hat{\mathcal{O}}'|\Psi\rangle\|}
 = BD(\hat{\mathcal{O}}, \hat{\mathcal{O}}')^{-\alpha} + B'L^{-\alpha'}.
   \end{align}
 One contribution decays algebraically with the operator distance, and another term scales with the inverse system size yet is independent of the distance. Therefore, our results apply not only to critical FF systems that exhibit power-law correlations in the usual sense, but also to a broader class of gapless FF systems.

 Furthermore, as explained in Sec.~\ref{sec: hidden}, an even wider range of models can be addressed by employing the extended Gosset-Huang inequality and considering more general correlation functions that are not necessarily generated by local operators.
\begin{proof}
Combining the assumption in Eq.~\eqref{assumption of main thm} and the Gosset-Huang inequality, we have
\begin{align}
 BL^{-\alpha}
 & \le \frac{|\langle\Psi|\hat{\mathcal{O}}(\hat{\mathbbm{1}}-\hat{G})\hat{\mathcal{O}}'|\Psi\rangle|}{\|\hat{\mathcal{O}}^\dagger|\Psi\rangle\| \|\hat{\mathcal{O}}'|\Psi\rangle\|} \nonumber\\
    &
 \le 2 \exp\left[-\left(\frac{AL}{c-1}-2\right)\sqrt{\frac{\epsilon_L}{g^2+\epsilon_L}}\right]\nonumber\\
    &
 \le 2e^2 \exp\left(-\frac{AL}{c-1}\sqrt{\frac{\epsilon_L}{g^2+\epsilon_L}}\right),
\end{align}
where we used $\epsilon_L/(g^2+\epsilon_L) \le 1$.
Thus, we obtain
\begin{align}
   \frac{\epsilon_L}{g^2+\epsilon_L} &
   \le \left(\frac{c-1}{AL}\ln\frac{2e^2}{BL^{-\alpha}}\right)^2 \nonumber\\ &
 = \left(\frac{c-1}{A}\right)^2\left(\alpha\frac{\ln L}{L} + \frac{\ln(2e^2/B)}{L}\right)^2 \nonumber\\ &
    \eqqcolon f(L).
\end{align}
Let us choose a length $L_0$ sufficiently large so that $L_0 \ge 2e^2/B$, $L_0 \ge B/2e^2$ and $f(L_0) \le 1/2$.
For $L \ge L_0$, we have
\begin{align}
   \epsilon_L &\le \frac{g^2f(L)}{1-f(L)}
   \nonumber\\
   &
   \le \frac{g^2f(L)}{1-f(L_0)} \le 2g^2f(L) \nonumber\\
   &
   \le 2\left(\frac{g(c-1)}{A}\right)^2\left(\alpha\frac{\ln L}{L} + \frac{\ln L_0}{L}\right)^2 \nonumber\\
   &
   \le 2\left(\frac{g(c-1)(\alpha+1)}{A}\right)^2\frac{(\ln L)^2}{L^2}.
\end{align}
If $\alpha = 0$, this bound improves to
\begin{align}
 \epsilon_L \le 2g^2f(L) = 2\left(\frac{g(c-1)}{A}\ln\frac{2e^2}{B}\right)^2\frac{1}{L^2}.
\end{align}
These bounds on the spectral gap establish a lower bound of $z \ge 2$.
\end{proof}

We note that our theorem needs the existence of algebraic correlation functions, which is a sufficient condition for the gapless property, but not necessary.

We also note that Eq.~\eqref{Hastings inequality} gives a weaker bound $z \ge 1$ in general quantum systems defined by a local Hamiltonian.

\subsection{Example: $XX$ model with fine-tuned magnetic field}
One of the simplest examples of FF systems exhibiting power-law correlations is the $XX$ model with a fine-tuned magnetic field.
Place spin-$1/2$ variables, with computational basis $\{|0_i\rangle,|1_i\rangle\}$, on the sites $i\in \Lambda$ of an arbitrary $d$-dimensional lattice.  
The nearest-neighbor projector $\hat H_{\langle i,j \rangle}$ is defined by 
\begin{align}&
 \ker \hat H_{\langle i,j \rangle} = \Span\left\{|0_i0_{j}\rangle, |0_i1_{j}\rangle+|1_i0_{j}\rangle\right\} \label{kernel of XX+MF}.
\end{align}
This local term can be expressed in terms of Pauli operators as
\begin{align}
   \hat H_{\langle i,j \rangle} = \frac1{4}(-\hat\sigma_i^x\hat\sigma_j^x - \hat\sigma_i^y\hat\sigma_j^y-\hat\sigma^z_i-\hat\sigma^z_j)+\frac1{2}\hat{\mathbbm{1}}.
   \label{Hamiltonian of XX+MF}
\end{align}
The total Hamiltonian is given by $\hat H \coloneqq \sum_{e}\hat H_e$, where $e$ runs over all edges of the lattice.
If the underlying lattice is connected, the ground states of this Hamiltonian are given by
\begin{align}
 |\Psi_0\rangle \coloneqq |0\cdots0\rangle,\quad
 |\Psi_1\rangle \coloneqq \frac1{\sqrt{|\Lambda|}}\sum_{i \in \Lambda}\hat\sigma^-_i |\Psi_0\rangle.
 \label{GS of XXZ+DM+MF}
\end{align}
This Hamiltonian is FF, and the following correlation function decays as a power of the system size, not the distance between operators:
\begin{align}
 \frac{|\langle\Psi_0|\hat\sigma_i^+(\hat{\mathbbm{1}}-\hat G)\hat\sigma_j^-|\Psi_0\rangle|}{\|\hat\sigma^-_i|\Psi_0\rangle\| \|\hat\sigma^-_j|\Psi_0\rangle\|} 
 &
 = |\langle\Psi_0|\hat\sigma_i^+|\Psi_1\rangle\langle\Psi_1|\hat\sigma_j^-|\Psi_0\rangle|\nonumber\\
 &
 = \frac1{|\Lambda|}.
 \label{correlation function for XXZ + MF}
\end{align}
Here, we assumed $i \ne j$.
Therefore, we can use Theorem~\ref{thm: main} by setting $D(\hat{\sigma}^+_i,\hat{\sigma}^-_j) = \mathrm{const}\times L$ to obtain $z \ge 2$.

\section{Rokhsar-Kivelson Hamiltonians and Markov processes}
\label{sec: RK}
In this section, we show that our theorem is applicable to a large class of \textit{classical} statistical processes, thereby proving a new bound on their dynamical exponents.
To do so, we introduce a specific class of FF Hamiltonians, known as generalized RK Hamiltonians or simply RK Hamiltonians, which is constructed from Markov processes satisfying the detailed balance condition~\cite{henleyClassicalQuantumDynamics2004,castelnovoQuantumMechanicsClassical2005}.
This class of Hamiltonians is also known as stochastic matrix form Hamiltonians or stoquastic FF Hamiltonians.
RK Hamiltonians can be viewed as a discretized counterpart of the stochastic quantization in quantum field theories~\cite{parisiPERTURBATIONTHEORYGAUGE1981,dijkgraafRelatingFieldTheories2010}.

RK Hamiltonians provide many paradigmatic FF models such as the kinetic Ising model, Kitaev's toric code~\cite{kitaevFaulttolerantQuantumComputation2003}, and the original RK point in the quantum dimer model~\cite{rokhsarSuperconductivityQuantumHardCore1988}.
In particular, the RK Hamiltonians of classical critical points are known as conformal quantum critical points~\cite{ardonneTopologicalOrderConformal2004}.

Using the correspondence between (local) RK Hamiltonians and (local) Markov processes with the detailed balance condition, we prove the lower bound $z \ge 2$ for RK Hamiltonians of critical points and corresponding Markov processes.

\subsection{Review of construction of RK Hamiltonians from Markov processes}
In this subsection, we review the mapping between RK Hamiltonians and the transition rate for Markov processes. This mapping allows us to apply bounds on quantum FF Hamiltonians to classical statistical processes.

\subsubsection{Markov processes}

A classical statistical system is defined by a positive function $w(C)$, called the Boltzmann weight, assigned to each classical configuration $C \in \mathcal{C}$.
The expectation value of a physical quantity $\mathcal{O}$ in the canonical distribution is defined by
\begin{align}
\langle\mathcal{O}\rangle \coloneqq \sum_{C \in \mathcal{C}} \mathcal{O}(C) \frac{w(C)}{Z},
\label{expectation of Markov process}
\end{align}
where the partition function $Z$ is written as
\begin{align}
 Z \coloneqq \sum_{C \in \mathcal{C}} w(C).
\end{align}
The Markov process that relaxes toward the equilibrium state is governed by the master equation
\begin{align}
 \frac{\mathrm{d}}{\mathrm{d}t}p(C, t) = \sum_{C'} W_{CC'} p(C', t),
 \label{master eq}
\end{align}
where $p(C, t)$ denotes the probability of finding the system in configuration $C$ at time $t$.
We can also construct a Markov chain by discretizing the time $t$.
Then, the time evolution of the probability is given by
\begin{align}
 p(C, t+\delta t) = p(C,t) + \delta t \sum_{C'} W_{CC'}p(C', t),
    \label{discrete master eq}
\end{align}
where $\delta t$ is the time step.
The transition-rate matrix $W_{CC'}$ must satisfy
$W_{CC'}\geq 0$
when $C \ne C'$ and the conservation of probability
\begin{align}
\sum_{C \in \mathcal{C}} W_{CC'}=0.
\end{align}
We also assume the detailed balance condition
\begin{align}
 W_{CC'}w(C')  = W_{C'C}w(C),
\end{align}
so that $p_\mathrm{eq}(C) \coloneqq w(C)/Z$ is a steady state solution of the Master equation.
This property is written as
\begin{align}
 \sum_{C' \in \mathcal{C}} W_{CC'}w(C') = 0,
\end{align}
and called the balance condition.
This follows from conservation of probability and the detailed balance condition.
 
\subsubsection{Rokhsar-Kivelson Hamiltonians}
Let us construct RK Hamiltonians~\cite{rokhsarSuperconductivityQuantumHardCore1988,henleyClassicalQuantumDynamics2004,castelnovoQuantumMechanicsClassical2005} from local Markov processes with the detailed balance condition.
We consider the Hilbert space spanned by orthogonal states $|C\rangle$ labeled by classical configurations $C \in \mathcal{C}$ of a given Markov process.
The transition-rate matrix $W_{CC'}$ is described as
the operator $\hat W = \sum_{C,C \in \mathcal{C}}W_{CC'}|C\rangle\langle C'|$ acting on $\mathcal{H}$.
The quantum Hamiltonian corresponding to the transition rate $W_{CC'}$ is given by the following similarity transformation.
\begin{align}
   \hat H \coloneqq -\hat S^{-1}\hat W\hat S,
   \label{RK-MC map}
\end{align}
where $\hat S$ is a positive diagonal operator defined as
\begin{align}
 \langle C|\hat{S}|C'\rangle \coloneqq \sqrt{\frac{w(C)}{Z}} \delta_{CC'}.
   \label{def of S}
 \end{align}
This Hamiltonian is called a generalized RK Hamiltonian, or simply an RK Hamiltonian.
The correspondence between RK Hamiltonians and Markov processes is illustrated in Fig.~\ref{fig: RK MC correspondence} and summarized in Table~\ref{tbl: RK MC correspondence}.
\begin{figure}[t]
    \centering
    \includegraphics[width=0.9\hsize]{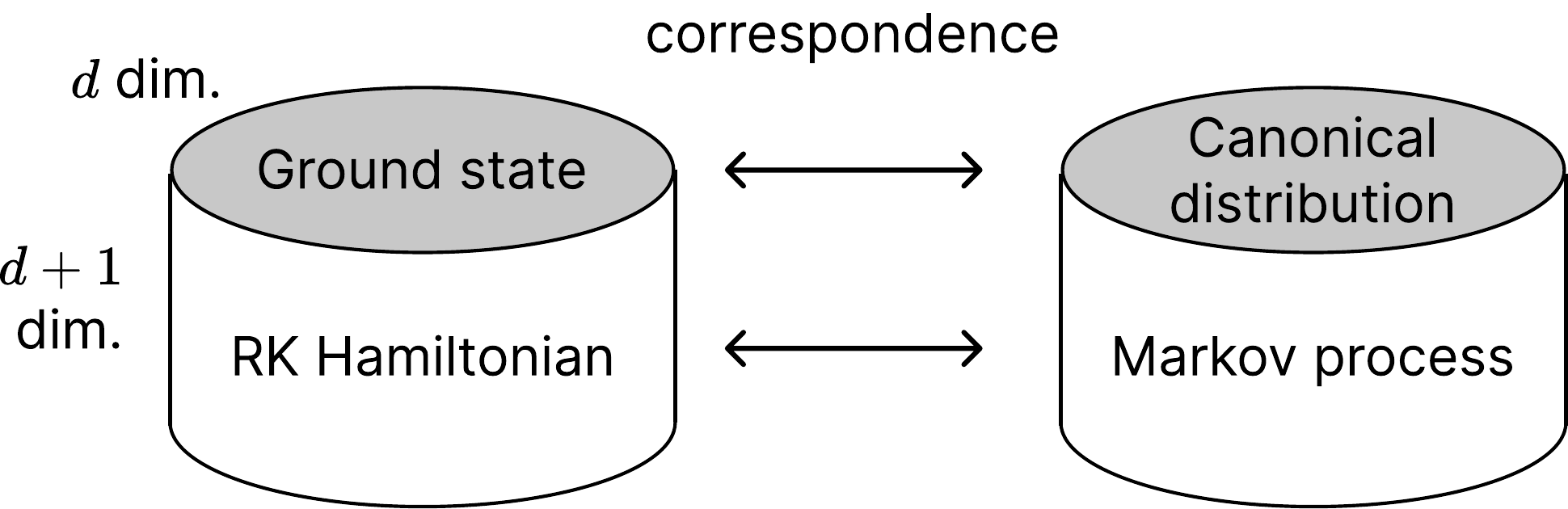}
    \caption{Correspondence between an RK Hamiltonian and a Markov process.
 The $d$-dimensional ground state corresponds to the $d$-dimensional Boltzmann distribution as in Eq.~\eqref{correspondence of expectation}.
 The Hamiltonian and the transition-rate matrix of a Markov process are connected by Eq.~\eqref{def of local transition rate}}.
    \label{fig: RK MC correspondence}
\end{figure}

\begin{table*}[t]
\centering
\begin{tabular}{cc}\hline\hline
RK Hamiltonians & Markov processes \\
\hline 
Hamiltonian  & Transition-rate matrix \\
$\hat H = \sum_i \hat H_i$ & $\hat W = \sum_i \hat W_i$\\
\hline
Imaginary-time Schr\"odinger equation & Master equation\\
$\mathrm{d}|\psi\rangle/\mathrm{d}t = -\hat H|\psi\rangle$ & $\mathrm{d}p(C)/\mathrm{d}t = \sum_{C'}(\hat W)_{CC'}p(C')$\\
\hline
Ground state  & Steady state (canonical distribution) \\
$|\Psi_\mathrm{RK}\rangle = \sum_{C} \sqrt{w(C)/Z} |C\rangle$  & $p_\mathrm{eq}(C) = w(C)/Z $ \\
\hline
Quantum expectation value & Classical expectation value \\
$\langle\Psi_\mathrm{RK}|\hat{\mathcal{O}}|\Psi_\mathrm{RK}\rangle$ & $\langle\mathcal{O}\rangle \coloneqq \sum_{C} \mathcal{O}(C) w(C)/Z$\\
\hline
Frustration-freeness & Conservation of probability \\
$\langle\Psi_\mathrm{RK}|\hat H_i = 0$ & $\sum_{C} (\hat W_i)_{CC'}=0$ \\
\hline
Nonpositive matrix elements & Non-negative matrix elements \\
$\langle C|\hat H_i|C'\rangle \le 0$ & $(\hat W_i)_{CC'} \ge 0$ \\
\hline
Symmetricity & Detailed balance condition \\
$\langle C | \hat{H}_i |C'\rangle = \langle C'| \hat{H}_i |C\rangle$ & $ (\hat W_i)_{CC'}w(C')  = (\hat W_i)_{C'C}w(C)$ \\\hline\hline
\end{tabular}
\caption{Correspondence between RK Hamiltonians and  Markov processes~\cite{henleyClassicalQuantumDynamics2004,castelnovoQuantumMechanicsClassical2005}.
Concrete correspondence is easily obtained from the similarity transformation given in Eq.~\eqref{def of local transition rate}   \label{tbl: RK MC correspondence}}
\end{table*}

The master equation in Eq.~\eqref{master eq} corresponds to the imaginary-time Schr\"odinger equation
\begin{align}
   &
 \frac{\mathrm{d}}{\mathrm{d}t}|\psi(t)\rangle = -\hat H|\psi(t)\rangle,\quad\\
   &
   \text{where}\quad|\psi(t)\rangle \coloneqq \hat S^{-1}\sum_{C \in \mathcal{C}}p(C,t)| C \rangle.
\end{align}
Since the steady state of the master equation is the canonical distribution $p_\mathrm{eq}(C) \coloneqq w(C)/Z$, the corresponding ground state of the RK Hamiltonian is given by
\begin{align}
 | \Psi_\mathrm{RK} \rangle = \hat S^{-1}\sum_{C \in \mathcal{C}}\frac{w(C)}{Z}| C \rangle
 = \sum_{C \in \mathcal{C}}\sqrt{\frac{w(C)}{Z}}| C \rangle.
   \label{Boltzmann state}
\end{align}
This state is properly normalized.
A notable property of
this ground state is that the expectation values in $| \Psi_\mathrm{RK} \rangle$ are equal to the classical one in Eq.~\eqref{expectation of Markov process} for a diagonal
operator $\hat{\mathcal{O}}$ that satisfies $\langle C | \hat{\mathcal{O}} | C' \rangle = \mathcal{O}(C)\delta_{CC'}$:
\begin{align}
 \langle\Psi_\mathrm{RK}|\hat{\mathcal{O}}|\Psi_\mathrm{RK}\rangle
 = \sum_{C} \mathcal{O}(C) \frac{w(C)}{Z} = \langle\mathcal{O}\rangle.
   \label{correspondence of expectation}
\end{align}
To guarantee the locality of an RK Hamiltonian, we assume that the transition rate $\hat W$ is decomposed as $\hat W = \sum_\mathrm{i \in \mathcal{S}} \hat W_i$, where sufficiently distant local terms $\hat W_i$ and $\hat W_j$ commute with each other.
Additionally, we assume that each local term is also a transition-rate matrix with the detailed balance condition:
\begin{align}&
 \langle C | \hat W_i | C' \rangle w(C') = \langle C' | \hat W_i | C \rangle w(C),\label{local DBC}\\
   &
 \sum_{C \in \mathcal{C}}\langle C | \hat W_i | C' \rangle = 0,
   \label{local prob conservation}\\
   &
 \langle C | \hat W_i | C' \rangle \ge 0\quad(C \ne C').
   \label{local matrix elements positivity}
\end{align}
The Markov processes satisfying these properties include the Gibbs sampling (or the heat-bath) algorithm~\cite{gemanStochasticRelaxationGibbs1984} and the Metropolis-Hastings algorithms~\cite{metropolisEquationStateCalculations1953,hastingsMonteCarloSampling1970}.
These algorithms and corresponding RK Hamiltonians are discussed in Appendix~\ref{sec: MCMC}.

The local terms of the corresponding RK Hamiltonian are given by
\begin{align}&
    \hat W_i \coloneqq -\hat S\hat H_i\hat S^{-1}.
    \label{def of local transition rate}
\end{align}
In the following discussion, we show that the RK Hamiltonian $\hat H = \sum_{i \in \mathcal{S}}\hat H_i$ is FF by this decomposition if conditions in Eqs.~\eqref{local DBC}-\eqref{local matrix elements positivity} hold.
First, we show that the operator $\hat H_i$ is Hermitian.
Indeed, $\hat H_i$ becomes real symmetric.
This follows from
\begin{align}
 \langle C |\hat H_i| C' \rangle &
 = - \frac1{\sqrt{w(C)}} \langle C|\hat W_i|C'\rangle\sqrt{w(C')} \nonumber\\
   &
 = - \frac1{\sqrt{w(C')}} \langle C'|\hat W_i|C\rangle\sqrt{w(C)} \nonumber\\
   &
 = \langle C' |\hat H_i| C \rangle.
   \label{symmetricity}
\end{align}
In the first line, we used $\hat H_i = -\hat S^{-1}\hat W_i\hat S$ and the definition of $\hat S$ in Eq.~\eqref{def of S}.
In the second line, we used the detailed balance condition in Eq.~\eqref{local DBC}.
Since the matrix elements of a local term $\hat H_i$ are real by construction, $\hat H_i$ is Hermitian.

The probability conservation for the local term in
Eq.~\eqref{local prob conservation} implies that
\begin{align}
 \langle \Psi_\mathrm{RK} |\hat H_i &
 = -\langle \Psi_\mathrm{RK} |\hat S^{-1}\hat W_i\hat S \nonumber\\
   &
 = -\sum_{C,C' \in \mathcal{C}}\langle C | \hat W_i | C' \rangle\langle C' |\hat S \nonumber\\
   &
 = 0.
   \label{GS annihilation}
\end{align}
Therefore, the operator $\hat H_i$ annihilates the state $| \Psi_\mathrm{RK} \rangle$.
Since $\hat H_i$ is positive semidefinite~\footnote{
 Since $(H_i)_{CC'}$ is real symmetric from Eq.~\eqref{symmetricity}, eigenstates of the matrix $H_i$ can be chosen to be real.
 Let $v$ be an eigenstate of $H_i$ with the lowest eigenvalue $\mu_i$ and $u$ be a vector whose components are given by $u_C = |v_C|$.
 Since off-diagonal elements of $H_i$ are non-positive, we have $\mu_i = \sum_{C,C'}v_C(H_i)_{CC'}v_{C'} \ge u_C(H_i)_{CC'}u_{C'}$.
 Then, the variational principle implies that $u$ is also an eigenvector of $H_i$ with the eigenvalue $\mu_i$.
On the other hand, $\sqrt{w(C)}$ is an eigenvector of $H_i$ with the eigenvalue $0$ and $\sum_{C}\sqrt{w(C)}u_C = \sum_{C}\sqrt{w(C)}|v_C| \ne 0$.
Therefore, $\mu_i=0$.
}, $\hat H_i| \Psi_\mathrm{RK} \rangle = 0$ means that $\hat H_i$ is minimized in the state $| \Psi_\mathrm{RK} \rangle$.
Therefore, the RK Hamiltonian defined by Eq.~\eqref{def of local transition rate} is FF if conditions in
Eqs.~\eqref{local DBC}-\eqref{local matrix elements positivity} hold.

\subsubsection{Critical slowing down and dynamical exponent}

The imaginary time evolution operator ${{e}}^{-\hat Ht}$ associated with an RK Hamiltonian corresponds to the transition matrix ${{e}}^{\hat Wt}$ of a classical stochastic process.
Since $-\hat H$ and $\hat W$ are connected by a similarity transformation, their spectra are identical.
For a diagonal operator $\mathcal{O}$ in the basis $\{|C\rangle\}$, the autocorrelation function is defined by
\begin{align}
A_{\mathcal{O}}(t)
\coloneqq \frac{\langle\Psi_\mathrm{RK}|\hat{\mathcal{O}}({{e}}^{-\hat{H} t}-\hat G)\hat{\mathcal{O}}|\Psi_\mathrm{RK}\rangle}{\langle\Psi_\mathrm{RK}|\hat{\mathcal{O}}(\hat{\mathbbm{1}}-\hat G)\hat{\mathcal{O}}|\Psi_\mathrm{RK}\rangle}.
\end{align}
If the ground state is unique, it coincides with the classical expression~\cite{landauGuideMonteCarlo2014}
\begin{align}
 A_{\mathcal{O}}(t) = \frac{\langle\mathcal{O}(t)\mathcal{O}(0)\rangle-\langle\mathcal{O}\rangle^2}{\langle\mathcal{O}^2\rangle-\langle\mathcal{O}\rangle^2}.
\end{align}
The spectral decomposition of ${{e}}^{-\hat Ht}$is given by
\begin{align}
 {{e}}^{-\hat H t} = \hat G + \sum_{\alpha} {{e}}^{-E_\alpha t} |\alpha\rangle\langle\alpha|,
\end{align}
where $| \alpha \rangle$ are the excited eigenstates of $\hat H$ and $E_\alpha$ are the corresponding energy eigenvalues.
Because of this decomposition, the autocorrelation function $A_{\mathcal{O}}(t)$ decays as ${{e}}^{-E_1t}$ in the long-time limit.
We then define the relaxation time $\tau$ by
\begin{align}
 \tau \coloneqq \frac1{E_1} = \frac1{\gap(\hat H)}.
\end{align}
The vanishing of the spectral gap implies a divergence of the relaxation time, a phenomenon known as critical slowing down.
The dynamical exponent $z$ for a classical stochastic system is defined using the relation
\begin{align}
 \tau \sim L^z,
\end{align}
where $L$ denotes the linear system size.
This definition of $z$ is identical to the quantum one.

Phase transitions in classical statistical systems, driven by changes in their Boltzmann weights, can be mapped to quantum phase transitions in the corresponding RK Hamiltonians.
At these quantum critical points, which are also known as conformal quantum critical points, ground-state correlation functions typically exhibit power-law decay.
Correspondingly, the associated Markovian dynamics manifest critical slowing down.
It is noteworthy that these conformal quantum critical points are distinct from many other quantum critical points whose low-energy dynamics are described by conformal field theories.
While equal-time correlation functions decay algebraically in both scenarios, their underlying dynamical mechanisms and properties can differ.

\subsubsection{Markov chain Monte Carlo methods}
We briefly explain the implication of the dynamical exponent in the Markov chain Monte Carlo methods.
Let us consider the calculation of the expectation value of a physical quantity $\mathcal{O}$ given by
\begin{align}
    \langle \mathcal{O} \rangle = \sum_{C} \mathcal{O}(C)\frac{w(C)}{Z},
\end{align}
where $w(C)/Z$ is the canonical distribution.
Since directly performing this summation over all possible configurations $C$ is computationally challenging for systems with many degrees of freedom, efficient sampling methods are necessary.
One of the most effective and widely used sampling techniques is the Markov chain Monte Carlo method~\cite{landauGuideMonteCarlo2014}.

In MCMC methods, we simulate a Markov chain whose stationary distribution is the canonical distribution $p_\mathrm{eq}(C) = w(C)/Z$.
We start the simulation with some initial configuration $C(0) = C_0$.
At each step, we randomly generate a new configuration $C(t+\delta t)$ from $C(t)$ using the master equation in Eq.~\eqref{master eq}.
By iterating this procedure over sufficiently long times, the distribution of configurations visited by the Markov chain converges to the target canonical distribution.

At a critical point, the relaxation time $\tau$ diverges as $\tau \sim L^z$, inducing a critical slowdown.
This significantly increases the cost of MCMC calculations.
The statistical error of expectation values is proportional to $1/\sqrt N$ , where $N$ is the effective number of independent samples. As two configurations are effectively independent after time $\tau$, we have
\begin{align}
    \text{error} \propto \frac1{\sqrt N} \propto \sqrt{\frac{\tau}{T}} \propto \frac{L^{z/2}}{\sqrt{T}},
\end{align}
where $T$ is the total time of the simulation.
Therefore, reducing the dynamical exponent $z$ is crucial for improving the efficiency of MCMC methods at critical points.

\subsection{Proof of $z \ge 2$ for RK Hamiltonians}

Let us now prove the main result of this section: the lower bound $z \ge 2$ for RK Hamiltonians of critical points.
This bound is stronger than the previous rigorous bound $\Delta \ge \gamma$~\cite{abeDynamicsIsingModel1969,halperinRigorousInequalitiesSpinRelaxation1973,lubetzkyCriticalIsingSquare2012}, which is equivalent to $z \ge 2-\eta$ where $\eta$ is the anomalous dimension~\footnote{
 The exponent $\Delta$ is defined by $\tau \sim |T-T_c|^{-\Delta}$, where $T$ is the temperature and $T_c$ is the critical temperature.
 The dynamical exponent $z$ is written as $z = \Delta/\nu$, where $\nu$ is the correlation length exponent.
 Using the Fisher's relation $\gamma/\nu = 2-\eta$, the bound $\Delta \ge \gamma$ gives $z \ge 2-\eta$.
 }.
The proof of $z \ge 2-\eta$ using a variational state is given in Appendix~\ref{sec: variational bound}.

\subsubsection{$z \ge 2$ for RK Hamiltonians of critical points}

Let us start with the definition of ergodicity in Markov processes.
Consider a graph $\mathcal{G}$ where each vertex corresponds to a classical configuration $C$. An edge connects two configurations, $C$ and $C'$, if the matrix element $\langle C'|\hat{W}|C\rangle$ is positive.
When this graph $\mathcal{G}$ is connected, the associated Markov process and the corresponding RK Hamiltonian are termed ergodic.
For such ergodic systems, the application of the Perron-Frobenius theorem to the operator $e^{-\hat{H} t}$ establishes the uniqueness of its ground state. 
The subsequent discussion concentrate on ergodic RK Hamiltonians, while nonergodic RK Hamiltonians are addressed in Appendix~\ref{sec: non-ergodic RK}.

\begin{thm}
 \label{thm: RK}
 For ergodic RK Hamiltonians of critical points, the dynamical exponent satisfies the lower bound $z \ge 2$.
\end{thm}
\begin{proof}
From the assumption of uniqueness of the ground state, the state $|\Psi_\mathrm{RK}\rangle$ given in Eq.~\eqref{Boltzmann state} is the unique ground state of the RK Hamiltonian.
Since the corresponding classical statistical system is at a critical point, there exists a local quantity $\mathcal{O}_{\bm{x}}$ such that
\begin{align}
\frac{|\langle\mathcal{O}_{\bm{x}} \mathcal{O}_{\bm{y}}\rangle - \langle\mathcal{O}_{\bm{x}}\rangle\langle\mathcal{O}_{\bm{y}}\rangle|}{\sqrt{\langle\mathcal{O}_{\bm{x}}^2\rangle\langle\mathcal{O}_{\bm{y}}^2\rangle}} \ge \mathrm{const}\times |\bm{x}-\bm{y}|^{-2\Delta_{\mathcal{O}}},
\end{align}
where $\Delta_{\mathcal{O}}$ is the scaling dimension of $\mathcal{O}_{\bm{x}}$.
Let $|\bm{x}-\bm{y}| = \mathrm{const}\times L$ and $\hat{\mathcal{O}}_{\bm{x}} \coloneqq \sum_{C} \mathcal{O}_{\bm{x}}(C) |C\rangle\langle C|$.
From the correspondence between classical and quantum correlation functions in Eq.~\eqref{correspondence of expectation}, we have
\begin{align}
    \frac{\langle\Psi_\mathrm{RK}|\hat{\mathcal{O}}_{\bm{x}}(\hat{\mathbbm{1}} - |\Psi_\mathrm{RK}\rangle\langle\Psi_\mathrm{RK}|)\hat{\mathcal{O}}_{\bm{y}}|\Psi_\mathrm{RK}\rangle}{\|\hat{\mathcal{O}}^\dagger_{\bm{x}}|\Psi_\mathrm{RK}\rangle\|\|\hat{\mathcal{O}}_{\bm{y}}|\Psi_\mathrm{RK}\rangle\|} \approx \mathrm{const}\times L^{-2\Delta_\mathcal{O}}.
\end{align}
Thus, we can use Theorem~\ref{thm: main} to obtain $z \ge 2$.
\end{proof}
Correspondingly, the following theorem for Markov processes also holds.
\begin{thm}
 For the ergodic Markov processes that relax to the canonical distribution of a critical point, the dynamical exponent $z$ satisfies the lower bound $z \ge 2$ if (i) the transition-rate matrix $\hat W$ is decomposed as
 \begin{align}
    \hat W = \sum_i \hat W_i,
 \end{align}
 where each term is a transition-rate matrix with finite range support, and (ii) the detailed balance condition holds for each local term $\hat W_i$.~\footnote{
 This condition is equivalent to the detailed balance condition for $\hat W$ if each off-diagonal element $W_{CC'}$ is attributed to a single local term $\hat W_i$.
 }
\end{thm}

In the above discussion, we assumed the existence of a power-law correlation function at the critical point. A more mathematically rigorous discussion in the case of the Ising model is given in Ref.~\cite{masaokaRigorousLowerBound2025}.

\subsection{Implications of locality and detailed balance}
This theorem serves as a no-go theorem that prohibits dynamics faster than $z = 2$ for dynamics satisfying locality and detailed balance. This class of MCMC methods includes the well-known Metropolis-Hastings algorithm~\cite{metropolisEquationStateCalculations1953,hastingsMonteCarloSampling1970} and the Gibbs sampling (heat-bath) algorithm~\cite{gemanStochasticRelaxationGibbs1984}.
Numerical values of dynamical exponents of such dynamics for various critical points are summarized in Table~\ref{table of dynamical exponent} in Sec.~\ref{sec: introduction}. The dynamical exponents are close to the optimal value $z=2$ of this class, meaning they have little room for improvement.

If we relax either of these conditions, it becomes possible to obtain dynamics with $z < 2$.
Let us focus on two particular examples summarized in Table~\ref{tbl: faster dynamics}.
The asymmetric simple exclusion process (ASEP), which is described by the transition-rate
\begin{align}
W = -\sum_{i=1}^L \Bigg[
   &
   \frac{1}{4}(1-\hat{\sigma}^z_i\hat{\sigma}^z_{i+1})
   -\frac{1+s}{2}\hat{\sigma}^+_i\hat{\sigma}^-_{i+1}\nonumber\\
   &
   -\frac{1-s}{2}\hat{\sigma}^-_i\hat{\sigma}^+_{i+1}
   +\frac{s}{2}(\hat{\sigma}^z_i\hat{\sigma}^z_{i+1})
 \Bigg].
\end{align}
This is a local model, but it does not satisfy the detailed balance condition.
Under the periodic boundary condition, it exhibits a dynamical exponent of $z = 3/2 < 2$~\cite{gwaBetheSolutionDynamicalscaling1992,kimBetheAnsatzSolution1995}.
The corresponding Hamiltonian $H = -W$, obtained via Eq.~\eqref{RK-MC map}, is real and non-symmetric, hence non-Hermitian.

Another example is the Wolff cluster algorithm~\cite{wolffCollectiveMonteCarlo1989}, which satisfies the detailed balance condition but updates clusters of spins collectively in a single step.
Since the cluster size can vary arbitrarily, this process is inherently nonlocal.
In the case of the Wolff algorithm for the 2D critical Ising model, numerical studies have estimated that $z \approx 0.3 < 2$~\cite{liuDynamicScalingClassical2014}.
Other sophisticated MCMC methods also achieve faster relaxation by employing nonlocal time evolution or by breaking the detailed balance condition~\cite{swendsenNonuniversalCriticalDynamics1987, suwaMarkovChainMonte2010,turitsynIrreversibleMonteCarlo2011}, as they can go around the no-go theorem.

\begin{table}[H]
    \centering
        \begin{tabular}{cccc}\hline\hline
 Model & Detailed balance & Locality & $z$ \\
        \hline
 ASEP & $\times$ & $\checkmark$ & $3/2$ \cite{gwaBetheSolutionDynamicalscaling1992,kimBetheAnsatzSolution1995} \\
 Wolff algorithm & $\checkmark$ & $\times$ & $ 0.3$ \cite{wolffCollectiveMonteCarlo1989}\\\hline\hline
    \end{tabular}
    \caption{Examples of faster dynamics than $z = 2$.}
    \label{tbl: faster dynamics}
 \end{table}

We also note that the presence of conserved charges or constraints on stochastic dynamics can alter the dynamic universality classes.
In general, such conserved quantities or dynamical constraints tend to increase the dynamical exponent $z$ or transform a gapped model into a gapless one by slowing down the relaxation process.
For instance, the Kawasaki dynamics~\cite{kawasakiDiffusionConstantsCritical1966} of the critical Ising model with conserved magnetization is described by the model $B$ in the Hohenberg-Halperin classification~\cite{hohenbergTheoryDynamicCritical1977} and has a larger dynamical exponent than the model $A$ without conserved charges.
While our bound, $z \ge 2$, holds regardless of the presence of conserved charges or constraints, it does not capture the changes in $z$ due to their presence.

It is also worth noting a field-theoretic implication of our result.
The effective theories of RK Hamiltonians are given by stochastic quantization~\cite{parisiPERTURBATIONTHEORYGAUGE1981,dijkgraafRelatingFieldTheories2010}. Consequently, our Theorem~\ref{thm: RK} leads to the conclusion that stochastic quantizations of conformal field theories satisfy $z\ge2$. However, a field-theoretic justification of this statement in the absence of the underlying lattice remains an open question.

\subsubsection{Generalization of RK Hamiltonians}
We can consider several simple extensions to the above results for the RK Hamiltonians.
While the correspondence with Markov processes may not hold in these extensions, the behavior of the dynamical exponent remains the same.

One such extension involves deforming the local terms of the Hamiltonian without altering its ground-state subspace. This modification leaves the ground states and the dynamical exponent unchanged, although it generally breaks the correspondence with classical systems.

Another possible extension is to introduce a configuration-dependent phase factor into the RK ground state of Eq.~\eqref{Boltzmann state}:
\begin{align}
 |\Psi\rangle \coloneqq \sum_{C} {{e}}^{i\theta(C)} \sqrt{\frac{w(C)}{Z}} |C\rangle,
    \label{RK GS with phase}
\end{align}
where, the phase $\theta(C)$ is a real function.
This state maintains the correspondence between quantum and classical expectation values, as described in Eq.~\eqref{correspondence of expectation}.
This implies that the proof of Theorem~\ref{thm: RK} is still valid for this generalization.
Further generalizations of the ground state using projected entangled pair states are discussed in Appendix~\ref{sec: PEPS}.

\subsection{Example: kinetic Ising model}\label{sec: examples}
As an example of the RK Hamiltonian, let us consider the RK Hamiltonian of the Ising model on any lattice $\Lambda$.
The resulting Hamiltonian and the corresponding Markov process are known as the kinetic Ising model.

In the Ising model, classical configurations $C \in \mathcal{C}$ are labeled by the list of the values of spins $s_i=\pm1$ as $C=\{s_i\}_{i \in \Lambda}$
The Boltzmann weight of the Ising model is
\begin{align}
 w(C) = \exp\left( K\sum_{\langle i,j\rangle}s_i s_j\right),
\end{align}
where $\langle i,j \rangle$ is the edge connecting $i$ and $j$, and $K$ is the coupling constant.
The unique ground state $|\Psi_\mathrm{RK}\rangle$ corresponding to this Boltzmann weight is given by
\begin{align}
 |\Psi_\mathrm{RK}\rangle = \frac{1}{\sqrt Z}\sum_{\{s\}}\exp\left(\frac{K}{2}\sum_{\langle i,j\rangle}s_{i}s_{j}\right)| \{s\} \rangle, \label{RK Ising state}
\end{align}
where $\{s\}$ represents all possible spin configurations and $Z$ is the partition function.

Let $B_i$ be the set of lattice sites adjacent to $i$, and let $E_i = \Lambda\setminus(\{i\} \cup B_i)$ be the remaining sites.
The transition rate of the RK Hamiltonian of the Gibbs sampling algorithm for the Ising model with no additional constraints is given by
\begin{align}
   \langle \{s\} | \hat W_i | \{s'\} \rangle
 = \left( \frac{\exp(Ks_i\sum_{j \in B_i}s_j)}{2\cosh(K\sum_{j \in B_i}s_j)} - \delta_{s_is'_i} \right) \prod_{j \in B_i}\delta_{s_js'_j}.
\end{align}
The corresponding Hamiltonian is computed as
\begin{align}
 \hat H_i = (\hat{\mathbbm{1}}_{i} \otimes \hat{\mathbbm{1}}_{B_i} - |\Psi_i\rangle\langle\Psi_i|)\otimes\hat{\mathbbm{1}}_{E_i},
\end{align}
where 
\begin{align}&
 |\Psi_i\rangle \coloneqq \sum_{s_i, \{s_j\}_{j \in B_i}} \frac{\exp(\frac{K}{2}s_i\sum_{j \in B_i} s_j)|s_i, \{s_j\}_{j \in B_i}\rangle}{\sqrt{2\cosh(K\sum_{j \in B_i}s_j)}}.
\end{align}
This can be described using the Pauli operators as
\begin{align}
\hat H_i = \frac{1}{2\cosh(K\sum_{j \in B_i}\hat\sigma^z_j)}\left( {{e}}^{K\hat\sigma^z_i \sum_{j \in B_i}\hat\sigma^z_j} - \hat\sigma^x_i \right),
\end{align}
where $\hat\sigma^z_i$ and $\hat\sigma^x_i$ are defined so that
\begin{align}
   \hat\sigma^z_i| s_i \rangle = s_i| s_i \rangle,\quad
   \hat\sigma^x_i| s_i \rangle = | -s_i \rangle.
\end{align}
This Hamiltonian is ergodic since one can obtain any spin configuration from a particular spin configuration by repeating local spin flips.

In $d \ge 2$-dimensional lattice, we obtain a critical point by tuning the coupling constant $K$ to the critical value $K_c$. 
The dynamical exponent $z$ satisfies $z \ge 2$ due to ergodicity and Theorem~\ref{thm: RK}.
For the square lattice, the exact value of $K_c$ is $\frac{1}{2}\ln(1+\sqrt{2}) \approx 0.44$, and the numerical value of $z = 2.1667(5)$ is computed by the MCMC method~\cite{nightingaleMonteCarloComputation2000}.
A typical time development of the MCMC simulation is depicted in Fig.~\ref{fig: kinetic Ising MCMC}.
For the cubic lattice, numerical values of $K_c \approx 0.22$~\cite{liuCriticalTemperatureTwodimensional2003} and $z = 2.0245(15)$ are known~\cite{hasenbuschDynamicCriticalExponent2020}.

\begin{figure*}[t]
   \centering
   \begin{minipage}[t]{0.49\hsize}
      \centering
      \includegraphics[width=\hsize]{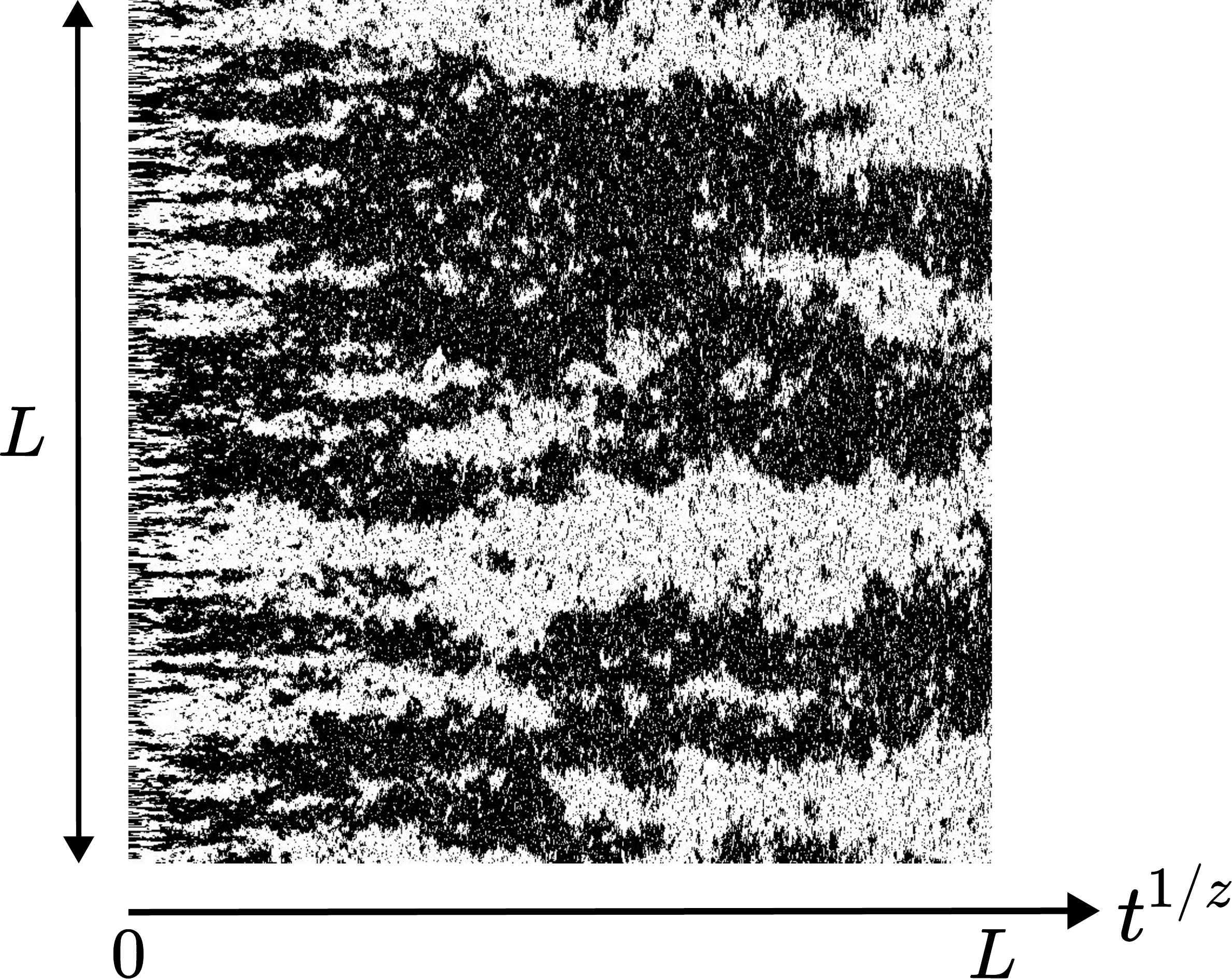}
      \subcaption{}
   \end{minipage}
   \begin{minipage}[t]{0.49\hsize}
      \centering
      \includegraphics[width=0.9\hsize]{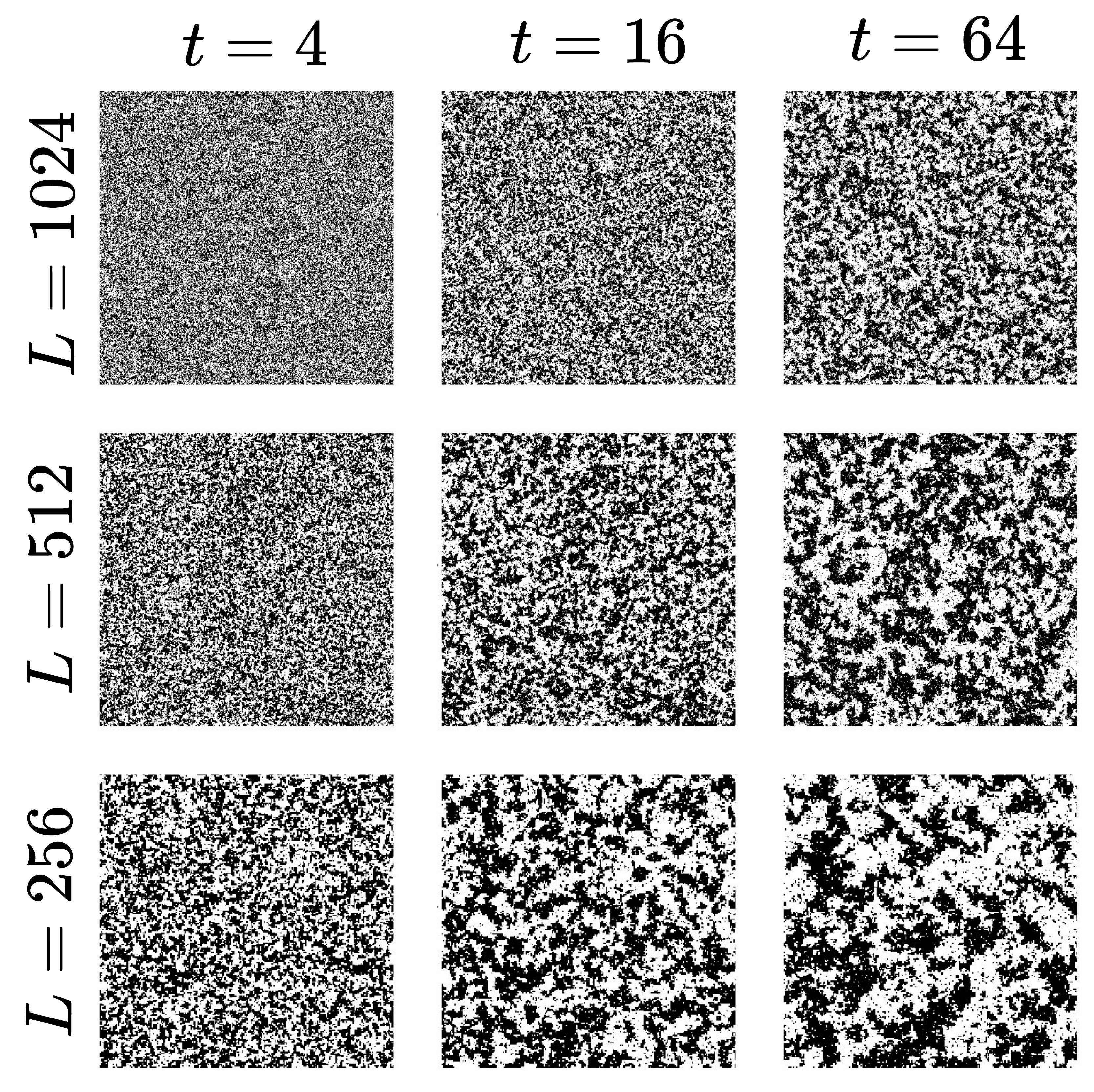}
      \subcaption{}
   \end{minipage}
   \caption{
       MCMC simulation of the two-dimensional kinetic Ising model at criticality.
 Black (white) pixels denote up (down) spins.
 (a) A $(1+1)$-dimensional slice of the simulation. The vertical axis represents a one-dimensional spatial slice, and the horizontal axis the rescaled time $t^{1/z}$ with $z \approx 2.17$.
 The characteristic time of relaxation scales with system size as $L^z$.
 (b)
 Snapshots of the spin configuration for various system sizes and times.
 Rows correspond to system sizes $L=1024$, $512$, and $256$ (top to bottom) and columns show times $t=4$, $16$, and $64$ (left to right).
  }
      \label{fig: kinetic Ising MCMC}
\end{figure*}

\section{Extension of the Gosset-Huang inequality}
\label{sec: hidden}

In this section, we discuss an extension of the Gosset-Huang inequality which broadens its applicability beyond correlations generated by strictly local operators.
In this extension, the excitation $\hat{\mathcal{O}}| \Psi \rangle$ created by a local operator $\hat{\mathcal{O}}$ from the ground state $| \Psi \rangle$ is replaced by a general excitation $| \mathcal{O} \rangle$ with a localized energy density.
This generalization allows us to prove that $z \ge 2$ for models whose ground states do not exhibit power-law decaying correlation functions for any local operators but possess hidden power-law correlations.

\subsection{Extended Gosset-Huang inequality}
To formulate an extended Gosset-Huang inequality, we first define the distance $D(|\mathcal{O}\rangle, |\mathcal{O}'\rangle)$ between general states $|\mathcal{O}\rangle$ and $|\mathcal{O}'\rangle$.
We define the energy support $\mathcal{S}(|\mathcal{O}\rangle)$ for a state $|\mathcal{O}\rangle$ by
\begin{align}
 \mathcal{S}(|\mathcal{O}\rangle) \coloneqq \{i \in \mathcal{S} \mid \hat H_i|\mathcal{O}\rangle \ne 0 \},
\end{align}
where $\mathcal{S}$ is the index set for the local terms $\{\hat H_i\}$.
Note that $\mathcal{S}(|\Psi\rangle)$ is empty for a ground state $|\Psi\rangle$, and $\mathcal{S}(|\mathcal{O}\rangle)$ is the whole index set $\mathcal{S}$ for almost all states.
Previously, we considered excitations of the form
\begin{align}
 |\mathcal{O}\rangle = \hat{\mathcal{O}}|\Psi\rangle,
\end{align}
where $\hat{\mathcal{O}}$ is a local operator.
In this case, $\mathcal{S}(|\mathcal{O}\rangle)$ consists of indices around the position of $\hat{\mathcal{O}}$.
However, we can also consider excitations with local energy supports but that are generated by nonlocal operators.
Furthermore, we can consider not only pointlike excitations, but also higher-dimensional excitations.
The distance $D(|\mathcal{O}\rangle, |\mathcal{O}'\rangle)$ between two states $|\mathcal{O}\rangle$ and $|\mathcal{O}'\rangle$ is given by
\begin{align}
 D(|\mathcal{O}\rangle, |\mathcal{O}'\rangle)
    &
 \coloneqq \min \{d(i, j) \mid i \in \mathcal{S}(|\mathcal{O}\rangle), j \in \mathcal{S}(|\mathcal{O}'\rangle) \} \nonumber\\
    &
 = \min \{d(i, j) \mid \hat H_i|\mathcal{O}\rangle \ne 0, \hat H_j|\mathcal{O}'\rangle \ne 0 \},
\end{align}
where $d(i,j)$ is the distance measured in the interaction graph, which is defined in Sec.~\ref{sec: GH inequality}.

The correlation function between two states $| \mathcal{O} \rangle$ and $| \mathcal{O}' \rangle$ is written as
\begin{align}
 \frac{|\langle\mathcal{O}|(\hat{\mathbbm{1}}-\hat{G})|\mathcal{O}'\rangle|}{\||\mathcal{O}\rangle\| \||\mathcal{O}'\rangle\|},
\end{align}
where $\hat G$ denotes the projector onto the ground space.
Then, an extended Gosset-Huang inequality is given by
\begin{align}&
 \frac{|\langle\mathcal{O}|(\hat{\mathbbm{1}}-\hat{G})|\mathcal{O}'\rangle|}{\||\mathcal{O}\rangle\| \||\mathcal{O}'\rangle\|} \nonumber\\
   &
 \le 2\exp\left[ -\left( \frac{D(| \mathcal{O} \rangle, | \mathcal{O}' \rangle)-1}{c-1} -2 \right)\sqrt{\frac{\epsilon}{g^2+\epsilon}}\right].
\end{align}
Using this inequality, we obtain the following extension of Theorem~\ref{thm: main}.
\begin{thm}\label{thm: general}
 For a frustration-free Hamiltonian, the dynamical exponent $z$ satisfies $z \ge 2$ if there exist two states $|\mathcal{O}\rangle$ and $|\mathcal{O}'\rangle$ such that $D(|\mathcal{O}\rangle, |\mathcal{O}'\rangle) \ge AL~(A>0)$ and
    \begin{align}
 \frac{|\langle\mathcal{O}|(\hat{\mathbbm{1}}-\hat{G})|\mathcal{O}'\rangle|}{\||\mathcal{O}\rangle\| \||\mathcal{O}'\rangle\|} \ge BL^{-\alpha},
        \label{assumption of general thm}
    \end{align}
 where $\hat G$ denotes the projector onto the ground space of $\hat H$ and $\alpha \ge 0$ and $B > 0$ are constants independent of $L$.
\end{thm}

\begin{proof}
 The proof is identical to that of Theorem~\ref{thm: main}.
\end{proof}

\subsection{Examples of hidden correlations}

We now present simple examples of models whose ground states exhibit no power-law correlations between local operators but possess hidden power-law correlations in the extended sense.

\subsubsection{1+1D zero-temperature kinetic Ising model}
Let us consider the 1+1D kinetic Ising model under the periodic boundary condition:
\begin{align}&
    \hat H = \sum_{i=1}^L \hat H_i, \\
    &
    \hat H_i = \frac{{{e}}^{-K\hat\sigma^z_i(\hat\sigma^z_{i-1}+\hat\sigma^z_{i+1})}-\hat\sigma^x_i}{2\cosh[K(\hat\sigma^z_{i-1}+\hat\sigma^z_{i+1})]}.
\end{align}
A short computation yields
\begin{align}
    \hat H_i &
 = \frac1{2}\hat{\mathbbm{1}}-\frac{J}{4}\hat\sigma^z_i(\hat\sigma^z_{i-1}+\hat\sigma^z_{i+1}) 
    -\frac h{4}\hat\sigma^x_i + \frac{J'}{4}\hat\sigma^z_{i-1}\hat\sigma^x_i\hat\sigma^z_{i+1},
\end{align}
where 
\begin{align}
 J = \tanh(2K),\quad h = \frac{2\cosh^2K}{\cosh(2K)},\quad J' = \frac{2\sinh^2K}{\cosh(2K)}.
\end{align}
The 1D Ising model has a critical point at $K \to \infty$.
Correspondingly, the 1+1D kinetic Ising model becomes gapless at $K \to \infty$.
In the zero-temperature limit $K \to \infty$, the Hamiltonian becomes
\begin{align}
    \hat H_i = \frac1{2}\left(\hat{\mathbbm{1}}-\frac1{2}\hat\sigma^z_i(\hat\sigma^z_{i-1}+\hat\sigma^z_{i+1})-\frac1{2}\hat\sigma^x_i(\hat{\mathbbm{1}}-\hat\sigma^z_{i-1}\hat\sigma^z_{i+1})\right).
\end{align}
The kernel of this local term is
\begin{align}
    \ker \hat H_i = \Span\{|000\rangle, |111\rangle, |1{+}0\rangle, |0{+}1\rangle\},
\end{align}
where $|{+}\rangle \coloneqq (|0\rangle+|1\rangle)/\sqrt2$.
The ground state under the periodic boundary condition is given by
\begin{align}
 |\Psi_0\rangle \coloneqq |0\cdots0\rangle,\quad|\Psi_1\rangle \coloneqq |1\cdots1\rangle.
\end{align}
While these ground states seem to have no correlation, they have hidden correlations due to the nontrivial kernel of the local terms of the Hamiltonian.
We can show $z \ge 2$ using such correlation functions.
Let us consider the following ``local'' excitation states constructed from superpositions of two-domain wall configurations.
\begin{align}&
 |\mathcal{O}^+_{1/2}\rangle \coloneqq |10\cdots0\rangle+|110\cdots0\rangle+\cdots+|1\cdots10\rangle, \\
    &
 |\mathcal{O}^-_{1/2}\rangle \coloneqq |01\cdots1\rangle+|001\cdots1\rangle+\cdots+|0\cdots01\rangle.
\end{align}
These states are the ground states of the open-boundary Hamiltonian $\hat H = \sum_{i=2}^{L-1}\hat H_i$.
We also define the excitations for other positions as $|\mathcal{O}^\pm_{i+(1/2)}\rangle \coloneqq \hat T^i|\mathcal{O}^\pm_{(1/2)}\rangle$, where $\hat T$ is the translation operator.
These states cannot be generated by applying local operators to the ground states.
However, these states can be treated as local excitations since
\begin{align}
    \hat H_j|\mathcal{O}^\pm_{i+(1/2)}\rangle = 0 \quad (j \ne i, i+1),
\end{align} 
and thus the energy support for $|\mathcal{O}^\pm_{i+(1/2)}\rangle$ is given by a localized region $\{i, i+1\}$.
The distance between the two states $|\mathcal{O}^+_{i+(1/2)}\rangle, |\mathcal{O}^-_{j+(1/2)}\rangle$ is computed as
\begin{align}
 D(|\mathcal{O}^+_{i+(1/2)}\rangle, |\mathcal{O}^-_{j+(1/2)}\rangle) = |i-j|-1.
\end{align}
We can also explicitly write down the operators that create these excitations:
\begin{align}
    \hat{\mathcal{O}}^+_{i+(1/2)} &= \hat\sigma^x_{i+1}\hat\sigma^z_{i+2}+\hat\sigma^x_{i+1}\hat\sigma^x_{i+2}\hat\sigma^z_{i+3}+\cdots \nonumber\\
    &\quad\quad\quad\quad\quad
 \cdots+\hat\sigma^x_{i+1}\hat\sigma^x_{i+2}\cdots\hat\sigma^x_{i-1}\hat\sigma^z_i,
    \\
    \hat{\mathcal{O}}^-_{i+(1/2)} &= \hat\sigma^x_i\hat\sigma^z_{i-1}+\hat\sigma^x_i\hat\sigma^x_{i-1}\hat\sigma^z_{i-2}+\cdots \nonumber\\
    &\quad\quad\quad\quad\quad
 \cdots+\hat\sigma^x_i\hat\sigma^x_{i-1}\cdots\hat\sigma^x_{i+2}\hat\sigma^z_{i+1}.
\end{align}
These operators commute with the local terms $\hat H_j$ except for $\hat H_i$ and $\hat H_{i+1}$.
The states $|\mathcal{O}^\pm_{i+(1/2)}\rangle$ are written as
\begin{align}
 |\mathcal{O}^+_{i+(1/2)}\rangle = \hat{\mathcal{O}}^+_{i+(1/2)}|0\cdots0\rangle,\nonumber\\
 |\mathcal{O}^-_{i+(1/2)}\rangle = \hat{\mathcal{O}}^-_{i+(1/2)}|0\cdots0\rangle.
\end{align}
The correlation function for $|\mathcal{O}^+_{i+(1/2)}\rangle$ and $|\mathcal{O}^-_{j+(1/2)}\rangle$ is computed as
\begin{align}
 \frac{|\langle\mathcal{O}^+_{i+(1/2)}|(\hat{\mathbbm{1}}-\hat G)|\mathcal{O}^-_{j+(1/2)}\rangle|}{\||\mathcal{O}^+_{i+(1/2)}\rangle\| \||\mathcal{O}^-_{j+(1/2)}\rangle\|}
 &
 = \frac{|\langle\mathcal{O}^+_{i+(1/2)}|\mathcal{O}^-_{j+(1/2)}\rangle|}{L-1} \nonumber\\ &
 = \frac{1}{L-1}.
\end{align}
Thus, by setting $|i-j| = \mathrm{const}\times L$, we can use Theorem~\ref{thm: general} to obtain the bound $z \ge 2$.

\subsubsection{Boundary terms on the XX model with a fine-tuned magnetic field}
Another example of hidden correlation is observed in the model of Eq.~\eqref{Hamiltonian of XX+MF} with the boundary terms:
\begin{align}
   \hat H = \sum_{i \in \partial\Lambda} \hat H_i + \sum_{\langle i,j \rangle}\hat H_{\langle i,j \rangle}.
\end{align}
Here, $\partial\Lambda$ consists of sites on the boundary of the lattice $\Lambda$ and the boundary terms are given by
\begin{align}
   \hat H_i = \frac1{2}\left(\hat{\mathbbm{1}}-\hat\sigma^z_i \right).
\end{align}
The index set $\mathcal{S}$ for this model is given by $\mathcal{S} = \{i \in \partial\Lambda\} \cup \{\langle i,j \rangle\}$.
One of the ground states in the original model
\begin{align}
 |\Psi_\Lambda\rangle \coloneqq \frac1{\sqrt{|\Lambda|}} \sum_{i \in \Lambda} \hat\sigma^-_i |\Psi_0\rangle
\end{align}
is penalized by the boundary terms.
(We changed the notation from $|\Psi_1\rangle$ to $|\Psi_\Lambda\rangle$ for later convenience.)
Thus, the unique ground state is given by
\begin{align}
 |\Psi_0\rangle = |0\cdots0\rangle.
\end{align}
This ground state exhibits no correlations between any
local operators.
However, there are power-law hidden correlation functions in an extended sense.
Specifically, there are correlation functions that converge to constants as the system size $L$ increases.

Let us consider two states $| \Psi_\Lambda \rangle$ and $| \Psi_R \rangle$, where $R$ is a subset of $\Lambda$ and $| \Psi_R \rangle$ is defined as
\begin{align}
 | \Psi_R \rangle = \frac1{\sqrt{|R|}}\sum_{i \in R}\hat\sigma_i^-| \Psi_0 \rangle.
\end{align}
The energy supports for these states $| \Psi_\Lambda \rangle$ and $| \Psi_R \rangle$ are localized around the boundary $\partial\Lambda$ and $\partial R$, respectively.
The correlation function between $| \Psi_\Lambda \rangle$ and $| \Psi_R \rangle$ is computed as
\begin{align}
 \frac{\langle\Psi_\Lambda|(\hat{\mathbbm{1}}-\hat G)|\Psi_R\rangle}{\|| \Psi_\Lambda \rangle\|\|| \Psi_R \rangle\|}
 = \langle\Psi_\Lambda|\Psi_R\rangle = \sqrt{\frac{|R|}{|\Lambda|}}.
 \label{corr func for XX+MF+Boundary}
\end{align}
Let $|R| = \mathrm{const}\times L^d$ and $D(| \Psi_\Lambda \rangle, | \Psi_R \rangle) = \mathrm{const}\times L$, where $d$ is the dimension of the lattice $\Lambda$.
For instance, if $\Lambda$ is the $d$-dimensional hypercubic lattice $\{1,\ldots,L\}^d$, a subset $R$ may be given by
\begin{align}
 R = \{\lfloor L/3\rfloor, \lfloor L/3\rfloor+1, \ldots, \lfloor2L/3\rfloor\}^d.
\end{align}
Then, the distance $D(| \Psi_\Lambda \rangle, | \Psi_R \rangle)$ is the distance between $\partial R$ and $\partial\Lambda$, which is proportional to $L$.
Thus, the correlation function in Eq.~\eqref{corr func for XX+MF+Boundary} is proportional to a constant, which allows us to apply Theorem~\ref{thm: general} and conclude that $z \ge 2$.

\section{Discusions}\label{sec: discussion}

This work establishes a rigorous lower bound $z\ge2$ on the dynamical exponent in frustration-free systems that exhibit power-law decaying correlation functions with increasing system size. This finding has significant implications for various classes of gapless FF systems.
These include Rokhsar-Kivelson Hamiltonians at critical points and systems with hidden algebraic correlations.
A key strength of our approach, based on the Gosset-Huang inequality, is its generality; it uniformly treats these diverse quantum critical points without specific constraints on dimensionality, lattice structure, or boundary conditions.

Furthermore, by leveraging the established correspondence between RK Hamiltonians and Markov processes, we demonstrate that $z\ge2$ also applies to Markov processes with local state updates that satisfy the detailed balance condition. This result theoretically substantiates an empirical observation prevalent in non-equilibrium statistical mechanics and computational physics. The application of the Gosset-Huang inequality, a tool originating from quantum information theory, highlights the value of a quantum information perspective for advancing our understanding of dynamical critical phenomena.

Despite the generality of the presented proof, several avenues for future investigation remain. A crucial open question is whether gapless FF systems universally exhibit power-law or slower decaying correlation functions, or if counterexamples exist. For instance, in the ordered phase of the kinetic Ising model for dimensions $d>2$, a power-law correlation function has not yet been identified, even though the system is expected to exhibit phase ordering kinetics with $z=2$~\cite{brayTheoryPhaseOrdering1994}. Conceivably, an extended definition of correlation functions, such as the one explored in Section~\ref{sec: hidden}, could address these cases.

Another important direction is to investigate the robustness of the atypical dynamical exponent in gapless FF systems against perturbations that break the FF condition.
This question pertains to the broader issue of how readily the distinct behaviors of gapless FF systems manifest in physical systems.
A universal understanding of this problem may require a comprehensive field-theoretic framework. However, elucidating the fundamental origins of both the FF condition and the Gosset-Huang inequality within a continuum field theory remains a major open problem.

The rich interplay between frustration-freeness, correlations, and dynamics suggests that gapless FF systems will remain a fertile ground for discovering novel phenomena and deepening our understanding of quantum and classical critical dynamics. Further theoretical and numerical explorations are expected to unravel the remaining intricacies of these fascinating systems.

\section{Acknowledgements}

We thank Hidemaro Suwa, Carolyn Zhang, Ruben Verresen, Jun Takahashi and Hal Tasaki for the useful discussions.
The work of H.W. is supported by JSPS KAKENHI Grant No.~JP24K00541.
This research is funded in part by the
Gordon and Betty Moore Foundation's EPiQS Initiative,
Grant No. GBMF8683 to T.S.

\appendix

\section{Proof of the Gosset-Huang inequality}
\label{sec: proof of GH inequality}
In this section, we prove the Gosset-Huang inequality, following the arguments in Refs.~\cite{gossetCorrelationLengthGap2016, anshuSimpleProofDetectability2016}.

\subsection{Detectability lemma}
First, we show the following lemma.
\begin{lmm}[Detectability lemma~\cite{aharonovDetectabilityLemmaIts2011,anshuSimpleProofDetectability2016}]
 \label{Detectability lemma}
 Let $\hat H = \sum_{i \in \mathcal{S}} \hat H_i$ be a FF Hamiltonian and let local terms $\{\hat H_i\}_{i \in \mathcal{S}}$ be orthogonal projectors.
 We define $\hat P_i \coloneqq \hat{\mathbbm{1}} - \hat H_i$ and $\hat P \coloneqq \prod_{i\in\mathcal{S}} \hat P_i$, where the product is taken in any order.
 Then, the following inequality holds:
 \begin{align}
 \|\hat P-\hat G\| \le \sqrt{\frac{g^2}{\epsilon+g^2}}.
 \label{detectability lemma (Appendix)}
 \end{align}
 Here, $\hat G$ is the projector onto the ground space of $\hat H$,  $\epsilon$ is the spectral gap of $\hat H$, and $g$ is the maximum degree of the interaction graph.
\end{lmm}

\begin{proof}
We follow the argument in Ref.~\cite{anshuSimpleProofDetectability2016}.
We label the indices $i\in \mathcal{S}$ by $1, 2, \dots, N$, where $N = |\mathcal{S}|$.
This labeling is entirely arbitrary and carries no information about the spatial arrangement of the sites.
Let $\hat P_i = \hat{\mathbbm{1}}-\hat H_i$ and $\hat P = \hat P_1\hat P_2\cdots\hat P_{N}$.
These operators satisfy the following relations since $\hat H_i\hat G = 0$:
\begin{align}
\hat P_i\hat G = \hat G\hat P_i = \hat G,\quad
\hat P\hat G = \hat G\hat P = \hat G.
\label{PG=GP=G}
\end{align}
Let us consider the quantity $ \|\hat H_i\hat P_{j}\cdots \hat P_{N}|\psi\rangle\| $ for any state $|\psi\rangle$ and perform the following procedure.
If $\hat H_i$ commute with $\hat H_{j}$ (and, hence, with $\hat P_j$), we use 
\begin{align}
\|\hat H_i\hat P_{j}\cdots \hat P_{N}|\psi\rangle\| &
= \|\hat P_{j}\hat H_i\hat P_{j+1}\cdots \hat P_{N}|\psi\rangle\|\nonumber\\
&
\le \|\hat H_i\hat P_{j+1}\cdots \hat P_{N}|\psi\rangle\|.
\end{align}
Otherwise, we recall $\hat P_{j} = \hat{\mathbbm{1}}-\hat H_{j}$ to see
\begin{align}&
\| \hat H_i\hat P_{j}\hat P_{j+1}\cdots \hat P_{N}|\psi\rangle\|\nonumber\\
&
\le \| \hat H_i\hat P_{j+1}\cdots \hat P_{N}|\psi\rangle\|
+ \|\hat H_i\hat H_{j}\hat P_{j+1}\cdots \hat P_{N}|\psi\rangle\| \nonumber\\
&
\le \|\hat H_i\hat P_{j+1}\cdots \hat P_{N}|\psi\rangle\|
+ \|\hat H_{j}\hat P_{j+1}\cdots \hat P_{N}|\psi\rangle\|.
\end{align}
Since we assumed that $\hat H_i$ is a projector, we have
\begin{align}
\hat H_i\hat P_i = \hat H_i(\hat{\mathbbm{1}}-\hat H_i) = 0.
\end{align}
This implies that the above procedure stops when $j = i$ before $\hat H_i$ reaches the rightmost position.
If we start from $\| \hat H_i\hat P|\psi\rangle\|$ and repeat the above procedure, we obtain
\begin{align}
\|\hat H_i\hat P|\psi\rangle\|
\le \sum_{l \in B_i} \|\hat H_{l}\hat P_{l+1}\cdots\hat P_{N}|\psi\rangle\|,
\end{align}
where $B_i \coloneqq \{j\mid [\hat H_i, \hat H_j] \ne 0 \}$ is the set of indices adjacent to $i$ in the interaction graph.
When $l=N$, the product $\hat{P}_{l+1}\cdots \hat{P}_N$ represents the identity.
The number of terms in the sum is given by the degree $g_i = |B_i|$ of the vertex $i$.
Since the square of the average can be bounded above by the average of the square, we find
\begin{align}
\| \hat H_i\hat P|\psi\rangle\|^2
&
\le g_i^2\left(\frac1{g_i} \sum_{l \in B_i} \|\hat H_{l} \hat P_{l+1} \cdots \hat P_{N} |\psi\rangle\|\right)^2 \nonumber\\
&
\le g_i^2\frac{1}{g_i} \sum_{l \in B_i} \|\hat H_{l} \hat P_{l+1} \cdots \hat P_{N}|\psi\rangle \|^2 \nonumber\\
&
\le g\sum_{l \in B_i} \|\hat H_{l} \hat P_{l+1} \cdots \hat P_{N}|\psi\rangle \|^2.
\end{align}
Summing this inequality from $i=1$ to $N$ yields
\begin{align}&
\langle\psi|\hat P^\dagger\hat H\hat P|\psi\rangle = \sum_{i = 1}^N \| \hat H_i\hat P|\psi\rangle\|^2\nonumber\\
&
\le g \sum_{i = 1}^N \sum_{l \in B_i} \|\hat H_{l} \hat P_{l+1} \cdots \hat P_{N}|\psi\rangle \|^2 \nonumber\\
&
= g \sum_{l = 1}^N \sum_{i \in B_l} \|\hat H_{l} \hat P_{l+1} \cdots \hat P_{N}|\psi\rangle \|^2 \nonumber\\
&
\le g^2 \sum_{l = 1}^{N} \|\hat H_{l}\hat P_{l+1} \cdots \hat P_{N}|\psi\rangle \|^2 \nonumber\\
&
= g^2\sum_{l = 1}^{N} \langle\psi|\hat P_{N}\cdots \hat P_{l+1}(\hat{\mathbbm{1}}-\hat P_{l})\hat P_{l+1} \cdots \hat P_{N}|\psi\rangle\nonumber\\
&
= g^2\sum_{l = 1}^{N} (\| \hat{P}_{l+1}\cdots\hat{P}_N|\psi\rangle \|^2 - \| \hat{P}_l\cdots\hat{P}_N|\psi\rangle \|^2 )\nonumber\\
&
= g^2(\||\psi\rangle\|^2 -\|\hat P|\psi\rangle\|^2).
\end{align}
Let $|\psi^\bot\rangle = \hat{G}^\perp \coloneqq (\hat{\mathbbm{1}}-\hat G)|\psi\rangle$.
Recalling that $\hat P\hat G = \hat G$ in Eq.~\eqref{PG=GP=G},
we find that the state $\hat P|\psi^\bot\rangle$ is orthogonal to the ground states since $\hat G\hat P|\psi^\bot\rangle = \hat G|\psi^\bot\rangle = 0$.
Therefore,
\begin{align}
\epsilon \| \hat P|\psi^\bot\rangle\|^2
&
\le \langle\psi^\bot|\hat P^\dagger \hat H\hat P|\psi^\bot\rangle \nonumber\\
&
\le g^2(\||\psi^\bot\rangle\|^2-\| \hat P|\psi^\bot\rangle\|^2),
\end{align}
where $\epsilon$ is the spectral gap of $\hat H$.
Thus, we obtain
\begin{align}
\|(\hat P-\hat G)|\psi\rangle\| &= \|\hat P(\hat{\mathbbm{1}}-\hat G)|\psi\rangle\| = \|\hat{P}|\psi^\bot\rangle\| \nonumber\\
&
\le \sqrt{\frac{g^2}{g^2+\epsilon}} \||\psi^\bot\rangle\| \nonumber\\
&
\le \sqrt{\frac{g^2}{g^2+\epsilon}} \||\psi\rangle\|.
\end{align}
Here, we used Eq.~\eqref{PG=GP=G} again.
This implies Eq.~\eqref{detectability lemma (Appendix)}.
\end{proof}

\subsection{Gosset-Huang inequality}

Let us prove the following Gosset-Huang inequality in an extended form.
\begin{thm}
Let $\hat H = \sum_{i \in \mathcal{S}} \hat H_i$ be a FF Hamiltonian
~\footnote{
If the Hamiltonian is not FF, the same bound as in Eq.~\eqref{GH inequality} holds by replacing $\gap(\hat H)$ by the ground-state energy.
However, this inequality is trivial in most cases due to the absence of distant states.
} with positive semidefinite local terms $\{\hat H_i\}_{i \in \mathcal{S}}$.
For any states $|\mathcal{O}\rangle$ and $|\mathcal{O}'\rangle$, the following inequality holds.
\begin{align}&
 \frac{|\langle\mathcal{O}|(\hat{\mathbbm{1}}-\hat G)|\mathcal{O}'\rangle|}{\||\mathcal{O}\rangle\| \||\mathcal{O}'\rangle\|} \nonumber\\
    &
 \le 2 \exp\left(-\left( \frac{D(|\mathcal{O}\rangle,|\mathcal{O}'\rangle)-1}{c-1} -2 \right)\sqrt{\frac{\epsilon}{g^2+\epsilon}}\right).
   \label{GH inequality}
\end{align}
Here, $\hat G$ is the orthogonal projector onto $\ker \hat H$, 
$D(|\mathcal{O}\rangle,|\mathcal{O}'\rangle)$ is the distance between $|\mathcal{O}\rangle$ and $|\mathcal{O}'\rangle$ defined in Sec.~\ref{sec: GH inequality}, 
and $\epsilon$ is the dimensionless spectral gap given by
\begin{align}
 \epsilon \coloneqq \frac{\gap(\hat H)}{\max_i \|\hat H_i\|}.
\end{align}
Also, $g$ is the maximum degree of the interaction graph, and $c$ is the chromatic number of the interaction graph.
The definitions of these integers are given in Sec.~\ref{sec: GH inequality}.
\end{thm}

The original form of the Gosset-Huang inequality in Eq.~\eqref{original GH inequality} can be derived by substituting $\langle \mathcal{O} | = \langle \Psi |\hat{\mathcal{O}}$ and $| \mathcal{O}' \rangle = \hat{\mathcal{O}}'| \Psi \rangle$ into Eq.~\eqref{GH inequality}, where $| \Psi \rangle$ is a ground state of the Hamiltonian:
\begin{align}&
 \frac{|\langle\mathcal{O}|(\hat{\mathbbm{1}}-\hat G)|\mathcal{O}'\rangle|}{\||\mathcal{O}\rangle\| \||\mathcal{O}'\rangle\|}
 = \frac{|\langle\Psi|\hat{\mathcal{O}}(\hat{\mathbbm{1}}-\hat G)\hat{\mathcal{O}}'|\Psi\rangle|}{\|\hat{\mathcal{O}}^\dagger|\Psi\rangle\| \|\hat{\mathcal{O}}'|\Psi\rangle\|} \nonumber\\
   &
 \le 2 \exp\left[-\left( \frac{D(|\mathcal{O}\rangle, |\mathcal{O}'\rangle)-1}{c-1} -2 \right)\sqrt{\frac{\epsilon}{g^2+\epsilon}}\right] \nonumber\\
   &
 \le 2 \exp\left[-\left( \frac{D(\hat{\mathcal{O}}, \hat{\mathcal{O}}')-1}{c-1} -2 \right)\sqrt{\frac{\epsilon}{g^2+\epsilon}}\right].
\end{align}
Here, we have used the inequality
\begin{align}
 D(\hat{\mathcal{O}}, \hat{\mathcal{O}}')
   &
 = \min\{d(i,j)\mid [\hat{\mathcal{O}}, \hat H_i] \ne 0,~[\hat{\mathcal{O}}',\hat H_j] \ne 0\} \nonumber\\
   &
 \le \min\{d(i,j)\mid \hat H_i| \mathcal{O} \rangle \ne 0,~ \hat H_j| \mathcal{O}' \rangle \ne 0\} \nonumber\\
   &
 = D(|\mathcal{O}\rangle, |\mathcal{O}'\rangle).
\end{align}

\begin{proof}
 We begin by assuming that each local term $\hat H_i$ is an orthogonal projector.
This assumption is relaxed in later discussions.

 We divide the vertices of the interaction graph into $c$ colors so that no two adjacent vertices have the same color, which means that the same colored $\hat H_i$ commute with each other.
 We denote the local term corresponding $i$th vertex with color $j$ as $\hat H_i^{(j)}$ and consider
    \begin{align}
        \hat P \coloneqq \prod_{i_c} \hat P_{i_c}^{(c)} \prod_{i_{c-1}} \hat P_{i_{c-1}}^{(c-1)}\cdots\prod_{i_2} \hat P_{i_2}^{(2)}\prod_{i_1} \hat P_{i_1}^{(1)},
    \end{align}
 where $\hat P_i^{(j)} \coloneqq \hat{\mathbbm{1}} - \hat H_i^{(j)}$.
 A key observation is that in the expression $\langle\mathcal{O}|(\hat P^\dagger \hat P)^n$, only the contributions from the operators $\hat P_i$ within the ``light cone'' shown in Fig.~\ref{fig: light cone} remain:
\begin{figure}[t]
\centering
\includegraphics[width=0.7\hsize]{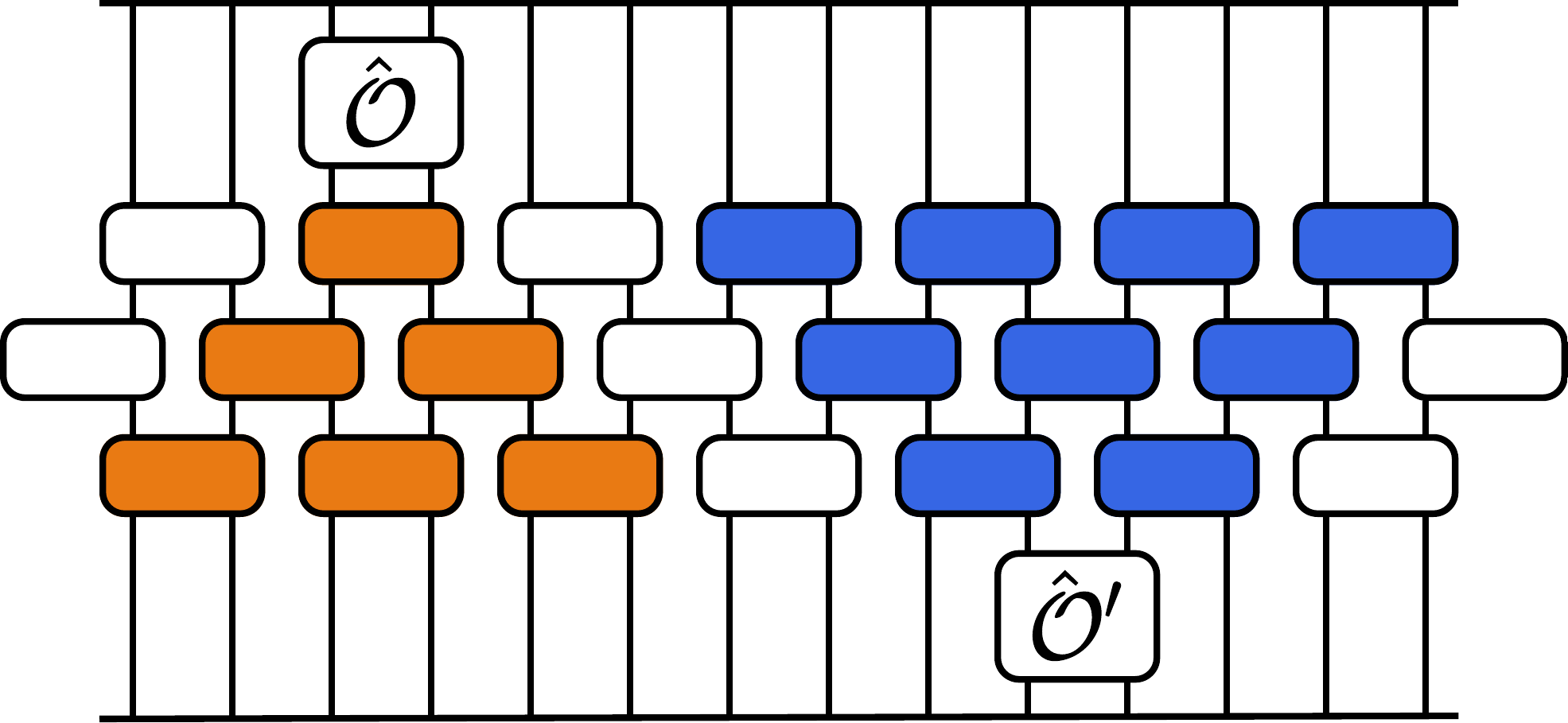}
\caption{Two light cones from $\langle\Psi|\hat{\mathcal{O}}$ (orange) and $\hat{\mathcal{O}'}|\Psi\rangle$ (blue) in a one-dimensional lattice.}
\label{fig: light cone}
\end{figure}
\begin{align}&
\langle\mathcal{O}|(\hat P^\dagger \hat P)^n \nonumber\\
&
= \langle\mathcal{O}| \prod_{i_1} \hat P_{i_1}^{(1)} \prod_{i_2}\hat P_{i_2}^{(2)}\prod_{i_3}\hat P_{i_3}^{(3)}\cdots \nonumber\\
&
= \langle\mathcal{O}|
\!\!\!\prod_{i_1:\,D(i_1,|\mathcal{O}\rangle)=0}\!\!\!\hat P_{i_1}^{(1)}
\!\!\!\prod_{i_2:\,D(i_2,|\mathcal{O}\rangle)\le1}\!\!\!\hat P_{i_2}^{(2)}
\!\!\!\prod_{i_3:\,D(i_3,|\mathcal{O}\rangle)\le2}\!\!\!\hat P_{i_3}^{(3)} \cdots,
\end{align}
 where 
\begin{align}
 D(i, |\mathcal{O}\rangle) &\coloneqq \min\{d(i, j) \mid j \in \mathcal{S}(|\mathcal{O}\rangle)\} \nonumber\\
   &
 = \min\{d(i,j) \mid \hat H_j| \mathcal{O} \rangle \ne 0\}.
\end{align}
A similar equation holds for $(\hat P^\dagger\hat P)^n|\mathcal{O}'\rangle$.
Therefore, we have
\begin{align}
 \langle\mathcal{O}|\mathcal{O}'\rangle = \langle\mathcal{O}|(\hat P^\dagger\hat P)^n|\mathcal{O}'\rangle,
   \label{eq: insertion of DL op}
   \end{align}
as long as the two light cones from $\langle\mathcal{O}|$ and $|\mathcal{O}'\rangle$ do not overlap.
The integer $n$ must satisfy
\begin{align}
   2n(c-1)+1 \le D(|\mathcal{O}\rangle, |\mathcal{O}'\rangle),
\end{align}
because the product $(\hat P^\dagger\hat P)^n$ has $2n(c-1)+1$ layers.
The number of layers is derived from the following idempotent properties:
\begin{align}
 \left( \prod_{i_1} \hat P_{i_1}^{(1)} \right)^2 = \prod_{i_1} \hat P_{i_1}^{(1)},\quad
 \left( \prod_{i_c} \hat P_{i_c}^{(c)} \right)^2 = \prod_{i_c} \hat P_{i_c}^{(c)}.
\end{align}
Consequently, the maximum $n$ in Eq.~\eqref{eq: insertion of DL op} is given by
\begin{align}
 n \le m \coloneqq \left\lfloor \frac{D(|\mathcal{O}\rangle, |\mathcal{O}'\rangle)-1}{2(c-1)} \right\rfloor.
\end{align}
 Therefore, for polynomials $Q_m(x)$ such that $\deg Q_m(x) \le m$ and $Q_m(1) = 1$, we have
\begin{align}
 \langle\mathcal{O}|\mathcal{O}'\rangle
 = \langle\mathcal{O}|Q_m(\hat P^\dagger \hat P)|\mathcal{O}'\rangle.
\end{align}
Recalling that $\hat P\hat G = \hat G\hat P = \hat G$ in Eq.~\eqref{PG=GP=G},
we find for $n > 0$ that
\begin{align}
 (\hat P^\dagger\hat P-\hat G)^n = (\hat P^\dagger \hat P)^n-\hat G.
\end{align}
We also see
    \begin{align}
 (\hat P^\dagger\hat P-\hat G)^n\hat G = 0.
    \end{align}
Combining these equations, we obtain
   \begin{align}
(\hat P^\dagger \hat P)^n -\hat G = (\hat P^\dagger\hat P - \hat G)^n\hat G^\bot,
   \end{align}
where $\hat G^\bot \coloneqq \hat{\mathbbm{1}}-\hat G$.
This equation also holds for $n=0$.
Therefore, we obtain
\begin{align}
 |\langle\mathcal{O}|(\hat{\mathbbm{1}}-\hat G)|\mathcal{O}'\rangle|
      &
 = |\langle\mathcal{O}|(Q_m(\hat P^\dagger\hat P)-\hat G)|\mathcal{O}'\rangle| \nonumber\\
      &
 = |\langle\mathcal{O}|Q_m(\hat P^\dagger\hat P-\hat G)\hat G^\bot|\mathcal{O}'\rangle| \nonumber\\
      &
 \le \||\mathcal{O}\rangle \| \||\mathcal{O}'\rangle \| \| Q_m(\hat P^\dagger\hat P-\hat G)\|.
   \label{bound of correlation function}
\end{align}
From the detectability lemma~\ref{Detectability lemma}, the operator norm of $\hat P^\dagger\hat P-\hat G$ is upper bounded as
\begin{align}
 \|\hat P^\dagger\hat P-\hat G\|
         &
 = \|(\hat P - \hat G)^\dagger(\hat P-\hat G)\| 
 = \|\hat P-\hat G\|^2 \nonumber\\
         &
 \le \frac{g^2}{g^2+\epsilon} \eqcolon 1 - \delta,
\end{align}
where $\delta \coloneqq \epsilon/(g^2+\epsilon)$.
Thus, the norm of a polynomial $Q_m(\hat P^\dagger\hat P-\hat G)$ is bounded as
\begin{align}
 \| Q_m(\hat P^\dagger\hat P-\hat G)\| \le \max_{0 \le x \le 1-\delta} |Q_m(x)|.
   \label{mini max problem}
\end{align}
We minimize the right-hand side of Eq.~\eqref{mini max problem} under the constraint $\deg Q_m \le m$ and $Q_m(1) = 1$.
The optimal polynomial is given by
\begin{align}
 Q_m(x) = \frac{T_m(\frac{2x}{1-\delta}-1)}{T_m(\frac2{1-\delta}-1)},
\end{align}
where $T_m(x)$ is the degree $m$ Chebyshev polynomial of the first kind defined by
\begin{align}
   \cos(mx) = T_m(\cos x).
\end{align}
For $x > 1$, the polynomial $T_m(x)$ satisfies
\begin{align}
 T_m(x)
   &
 = \cosh(m\arccosh x) \nonumber\\
   &
 >\frac1{2}\exp(m\arccosh x) \nonumber\\
   &
 \ge\frac1{2}\exp\left( 2m\tanh\left(\frac1{2}\arccosh x\right)\right) \nonumber\\
   &
 = \frac1{2}\exp\left( 2m\sqrt{\frac{x-1}{x+1}} \right).
\end{align}
 Also $|T_m(x)| \le 1$ for $|x| \le 1$.
Therefore, for $0 \le x \le 1-\delta$, $Q_m(x)$ satisfies
    \begin{align}
 |Q_m(x)| &\le  \frac1{|T_m(\frac2{1-\delta}-1)|} \nonumber\\
 &
 \le \left\lbrack \frac1{2}\exp\left( 2m\sqrt{\frac{\frac{2}{1-\delta}-2}{\frac{2}{1-\delta}}} \right)\right\rbrack^{-1} \nonumber\\
    &
 = 2\exp\left(-2m\sqrt{\frac{\epsilon}{g^2+\epsilon}}\right).
    \label{bound of Q_m(x)}
    \end{align}
 From Eq.~\eqref{bound of correlation function}, Eq.~\eqref{mini max problem} and Eq.~\eqref{bound of Q_m(x)}, we obtain
\begin{align}&
 \frac{\langle\mathcal{O}|(\hat{\mathbbm{1}}-\hat G)|\mathcal{O}'\rangle}{\||\mathcal{O}\rangle\| \||\mathcal{O}'\rangle\|} 
 \le 2 \exp\left(-2m\sqrt{\frac{\epsilon}{g^2+\epsilon}}\right) \nonumber\\
   &
 = 2\exp\left( -2\left\lfloor\frac{D(|\mathcal{O}\rangle,|\mathcal{O}'\rangle)-1}{2(c-1)}\right\rfloor\sqrt{\frac{\epsilon}{g^2+\epsilon}} \right) \nonumber\\
   &
 \le 2\exp\left( -\left( \frac{D(|\mathcal{O}\rangle,|\mathcal{O}'\rangle)-1}{c-1}-2 \right)\sqrt{\frac{\epsilon}{g^2+\epsilon}} \right).
   \label{GH for projector}
\end{align}

We note the exceptional cases where the distance $D(|O\rangle, |O'\rangle)$ is infinite. These occur when the interaction graph contains no edges, or when there exists no path of noncommuting local terms connecting the two states $|\mathcal{O}\rangle$ and $|\mathcal{O}'\rangle$. In such cases, the correlation function in Eq.~\eqref{GH inequality} vanishes, and the Gosset-Huang inequality holds with our formal convention $D(|\mathcal{O}\rangle, |\mathcal{O}'\rangle) =\infty$.

Finally, we consider the case where local terms $\hat H_i$ are not orthogonal projectors.
The local terms $\hat H_i$ are assumed to be positive semidefinite, which can always be achieved by adding a suitable constant multiple of the identity operator to each term.
In this case, we can deform the local terms $\hat H_i$ into orthogonal projectors $\hat{\tilde H}_i$ while preserving $\ker \hat H_i$.
The deformed Hamiltonian $\hat{\tilde H} = \sum_i \hat{\tilde H}_i$ has the same ground space as $\hat H$.
These two Hamiltonians also satisfy
\begin{align}
 \frac{\hat H}{\max_i \|\hat H_i\|}
 \le \sum_{i \in \mathcal{S}} \frac{\hat H_i}{\|H_i\|}
 \le \sum_{i \in \mathcal{S}} \hat{\tilde H}_i = \hat{\tilde H}.
\end{align}
Therefore, the dimensionless spectral gap $\epsilon$ of the Hamiltonian $\hat H$ and the spectral gap $\tilde{\epsilon}$ of the Hamiltonian $\hat{\tilde H}$ satisfy
\begin{align}
 \epsilon \coloneqq \frac{\gap(\hat H)}{\max_i \|\hat H_i\|} \le \gap(\hat{\tilde H}) \eqcolon \tilde{\epsilon}.
   \label{inequality for spectral gap}
\end{align}
Furthermore, this deformation may modify the interaction graph.
If two local terms $\hat H_i$ and $\hat H_j$ commute with each other, these two operators can be simultaneously diagonalized:
\begin{align}
   \hat H_i = \sum_a E_{ia} | a \rangle\langle a |,\quad
   \hat H_j = \sum_a E_{ja} | a \rangle\langle a |,
\end{align}
where $\{| a \rangle\}_a$ is an orthonormal basis and $\{E_{ia}\}_a$ and $\{E_{ja}\}_a$ are the eigenvalues of $\hat H_i$ and $\hat H_j$, respectively.
Since our deformation changes only eigenvalues $\{E_{ia}\}_a$ and $\{E_{ja}\}_a$ into values of $0$ or $1$ and does not change the basis $\{| a \rangle\}_a$, the operators $\hat{\tilde H}_i$ and $\hat{\tilde H}_j$ also commute with each other.
In other words, if $i$ and $j$ are not connected in the interaction graph of $\{\hat H_i\}$, they are also not connected in that of $\{\hat{\tilde H}_i\}$.
However, the converse is not generally true.
Therefore, the maximum degree $g$ and chromatic number $c$ for $\{\hat H_i\}$, and their counterparts $\tilde g$ and $\tilde c$ for $\{\hat{\tilde H}_i\}$, satisfy
\begin{align}
 g \ge \tilde g,\quad\text{and}\quad c \ge \tilde c.
   \label{inequality for g and c}
\end{align}
Using Eqs.~\eqref{inequality for spectral gap} and~\eqref{inequality for g and c}, and the Gosset-Huang inequality for projector local terms in Eq.~\eqref{GH for projector}, we obtain
\begin{align}&
 \frac{\langle\mathcal{O}|(\hat{\mathbbm{1}}-\hat G)|\mathcal{O}'\rangle}{\||\mathcal{O}\rangle\| \||\mathcal{O}'\rangle\|} \nonumber\\
   &
 \le 2\exp\left[ -\left( \frac{D(|\mathcal{O}\rangle,|\mathcal{O}'\rangle)-1}{\tilde c-1}-2 \right)\sqrt{\frac{\tilde{\epsilon}}{{\tilde g}^2+\tilde{\epsilon}}} \right] \nonumber\\
   &
 \le 2\exp\left[ -\left( \frac{D(|\mathcal{O}\rangle,|\mathcal{O}'\rangle)-1}{\tilde c-1} -2\right)\sqrt{\frac{\epsilon}{\tilde{g}^2+\epsilon}} \right] \nonumber\\
   &
 \le 2\exp\left[ -\left( \frac{D(|\mathcal{O}\rangle,|\mathcal{O}'\rangle)-1}{c-1} -2\right)\sqrt{\frac{\epsilon}{g^2+\epsilon}} \right].
\end{align}
 Here, we used that $\sqrt{x/(\tilde{g}^2+x)}$ is a monotonically increasing function of $x$ for $x \ge 0$.
\end{proof}

\section{Construction of RK Hamiltonians from Markov processes} \label{sec: MCMC}
\subsection{The Gibbs sampling algorithm}
As an example of the construction of RK Hamiltonians, we introduce the Gibbs sampling (heat-bath) algorithm~\cite{gemanStochasticRelaxationGibbs1984}, one of the most standard MCMC methods.
We divide the system into the $R_i$: a finite-size region around a position $i$, $B_i$: the boundary of $R_i$ interacting with $R_i$, and $E_i$: the external system not interacting with $R_i$.
Then, we can decompose the Boltzmann weight as
\begin{align}
 w(C) = w^{R_i}(C_{R_i},C_{B_i}) w^{E_i}(C_{B_i},C_{E_i}), \label{locality of Boltzmann weight}
\end{align}
where $C_{R_i},C_{B_i},$ and $C_{E_i}$ are the configurations in $R_i,B_i,$ and $E_i$, respectively.
We consider the Gibbs sampling algorithm that updates the variables in $R_i$ at each step.
The local transition-rate matrices are given by
\begin{align}
    \hat W_i = \sum_{C_{B_i}}\hat W_i(C_{B_i}) \otimes |C_{B_i}\rangle\langle C_{B_i}| \otimes \hat{\mathbbm{1}}_{E_i} ,
\end{align}
where
\begin{align}
 \langle C_{R_i}|\hat W_i(C_{B_i})|C'_{R_i}\rangle 
 &
 = \frac{w^{R_i}(C_{R_i},C_{B_i})}
 {\sum_{C''_{R_i}}w^{R_i}(C''_{R_i},C_{B_i})}
 - \delta_{C_{R_i} C'_{R_i}}.
\end{align}
This transition-rate matrix conserves probability and satisfies the detailed balance condition.
The corresponding local terms of the Hamiltonian are orthogonal projectors written as
\begin{align}
    \hat{H}_i = \left( \hat{\mathbbm{1}}_{R_i} \otimes \hat{\mathbbm{1}}_{B_i} - \sum_{C_{B_i}} |\Psi(C_{B_i})\rangle\langle\Psi(C_{B_i})| \right) \otimes\hat{\mathbbm{1}}_{E_i},
\end{align}
where
\begin{align}&
 |\Psi(C_{B_i})\rangle \coloneqq \sum_{C_{R_i}}\sqrt{\frac{ w^{R_i}(C_{R_i}, C_{B_i})}{Z(C_{B_i})}}|C_{R_i}\rangle \otimes |C_{B_i}\rangle, \\
    &
 Z(C_{B_i}) \coloneqq \sum_{C_{R_i}} w^{R_i}(C_{R_i}, C_{B_i}).
\end{align}

\subsection{The Metropolis-Hastings algorithm}
Another widely used MCMC method is the Metropolis-Hastings algorithm~\cite{metropolisEquationStateCalculations1953,hastingsMonteCarloSampling1970}.
The off-diagonal elements of the transition-rate matrix are given by
\begin{align}
 W_{CC'} =  A(C,C')Q(C|C'),
\end{align}
for $C \ne C'$.
The diagonal elements are determined so that probability is conserved:
\begin{align}
 W_{CC} = -\sum_{C'\ne C}A(C',C)Q(C'|C).
\end{align}
A proposal distribution $Q(C'|C)$ takes non-negative values, and $A(C',C)$ is defined as
\begin{align}
 A(C,C') \coloneqq \min\left(1,\frac{w(C)Q(C'|C)}{w(C')Q(C|C')}\right).
\end{align}
The transition-rate matrix satisfies the detailed balance condition, since
\begin{align}
 W_{CC'} w(C')
    &
 = \min\left( w(C')Q(C|C'),w(C)Q(C'|C)\right) \nonumber\\
    &
 = W_{C'C} w(C).
    \label{DBC for Metropolis-Hastings}
\end{align}
The corresponding RK Hamiltonian is given by
\begin{align}
 \langle C|\hat H|C'\rangle
    &
 =  -\sqrt{\frac{w(C')}{w(C)}}A(C,C')Q(C|C')
\end{align}
for $C \ne C'$ and $\langle C|\hat H|C\rangle = - W_{CC}$.
This Hamiltonian is local if the given Boltzmann weight $w(C)$ and the proposal distribution $Q(C'|C)$ are local.

\section{Improvement of the bound for the critical kinetic Ising model}
In the proof of Theorem~\ref{thm: main}, we showed that, when the ground state exhibits power-law correlations, the spectral gap satisfies the bound $\epsilon \le O[(\log L)^2/L^2]$.
When the relevant correlation function instead approaches a nonzero constant, the bound sharpens to $\epsilon \le O(1/L^2)$.
Several FF models indeed exhibit such constant correlations for suitably chosen extended operators.
We now verify, using heuristic scaling arguments, that the spectral gap satisfies $\epsilon \le O(1/L^2)$ for the $2+1$-dimensional critical kinetic Ising model with arbitrary boundary conditions.

Let $C$ be a closed loop on the lattice, and define the corresponding loop operator
\begin{align}
\hat\sigma^z_{C} \coloneqq \sum_{\bm{x} \in C} \hat\sigma^z_{\bm{x}},
\end{align}
where the index $\bm{x}$ denotes the lattice sites.
We now consider two concentric circular loops centred at the lattice midpoint: $C$ with radius $L/8$ and $C'$ with radius $L/4$.
The correlation function between these two loop operators is
\begin{align}
 \frac{\langle \Psi|\hat\sigma^z_{C}(\hat{\mathbbm{1}}-\hat{G})\hat\sigma^z_{C'}| \Psi \rangle }{\|\hat\sigma^z_{C}|\Psi\rangle\| \|\hat\sigma^z_{C'}|\Psi\rangle\|}.
\end{align}
To analyze its scaling, we introduce the rescaled coordinate $\bm{y} \coloneqq \bm{x}/L$ inside the unit square and a continuum spin operator $\hat{\sigma}^z(\bm{y}) \coloneqq L^{\Delta_\sigma}\hat{\sigma}^z_{L\bm{y}}$, where $\Delta_\sigma = 1/8$.
With these definitions, correlation functions for the continuum spin $\hat\sigma^z(\bm{y})$ converge to finite values in the $L\to\infty$ limit.
Each factor in the correlation function can then be written as
\begin{align}&
 \langle \Psi|\hat\sigma^z_{C}(\hat{\mathbbm{1}}-\hat{G})\hat\sigma^z_{C'}| \Psi \rangle \nonumber\\
   &
 = L^{7/4}\oint_C \mathrm{d}{y} \oint_{C'} \mathrm{d}{y'} ( \langle\sigma(\bm{y})\sigma(\bm{y}')\rangle - \langle\sigma(\bm{y})\rangle\langle\sigma(\bm{y}')\rangle ),\\
   &
 \langle \Psi|\hat\sigma^z_{C}\hat\sigma^z_{C}| \Psi \rangle = L^{7/4}\oint_{C} \mathrm{d}{y} \oint_{C} \mathrm{d}{y'} \langle\sigma(\bm{y})\sigma(\bm{y}')\rangle, \\
   &
 \langle \Psi|\hat\sigma^z_{C'}\hat\sigma^z_{C'}| \Psi \rangle = L^{7/4}\oint_{C'} \mathrm{d}{y} \oint_{C'} \mathrm{d}{y'} \langle\sigma(\bm{y})\sigma(\bm{y}')\rangle.
\end{align}
Because each integral contains at most an integrable singularity of the form $1/|\bm{y}-\bm{y}'|^{1/4}$ at $\bm{y}=\bm{y}'$, none of them diverge.
Consequently, all three factors scale as $L^{7/4}$.
Hence, for sufficiently large $L$, we obtain
\begin{align}
\frac{\langle \Psi|\hat\sigma^z_{C}(\hat{\mathbbm{1}}-\hat{G})\hat\sigma^z_{C'}| \Psi \rangle }{\|\hat\sigma^z_{C}|\Psi\rangle\| \|\hat\sigma^z_{C'}|\Psi\rangle\|} = O(1).
\end{align}
Repeating the argument of Theorem~\ref{thm: main} with this correlation function yields the improved bound $\epsilon \le O(1/L^2)$.

\section{RK Hamiltonians with conserved quantities}

This section focuses on gapless RK Hamiltonians with conserved quantities.
In such systems, the classical configuration space decomposes into disconnected sectors that cannot be connected by the time evolution.
Each of these sectors possesses a degenerate ground state, and the Hamiltonian can be block-diagonalized.
As a result, the inequality for the dynamical exponent, $z \ge 2$, applies separately to each block.
We illustrate these ideas with two representative cases: (i) systems possessing a conserved $\mathrm{U}(1)$ charge and (ii) systems characterized by topological charges.

\subsection{FF Hamiltonians with invariant subspaces}

We first introduce a general framework for FF Hamiltonians that possess invariant subspaces.
The discussion applies to arbitrary FF systems and is not restricted to the RK Hamiltonians.
If local terms of a Hamiltonian have a common invariant subspace, we can consider the FF property and the Gosset-Huang inequality for each subspace.

Let $\mathcal{K}$ be a subspace of the Hilbert space, and let $\hat\Pi^{\mathcal{K}}$ be the orthogonal projector onto $\mathcal{K}$.
If the local terms $\hat{H}_i$ of the Hamiltonian satisfy
\begin{align}
 \forall i,\quad\hat H_i\hat\Pi^{\mathcal{K}} = \hat\Pi^{\mathcal{K}} \hat H_i \hat\Pi^{\mathcal{K}}, 
\end{align}
then $\mathcal{K}$ is an invariant subspace.
We define the Hamiltonian $\hat{H}^{\mathcal{K}}$ for an invariant subspace $\mathcal{K}$ as
\begin{align}
    \hat H_i^{\mathcal{K}} \coloneqq \hat\Pi^{\mathcal{K}} \hat H_i \hat\Pi^{\mathcal{K}},\quad \hat H^{\mathcal{K}} \coloneqq \sum_i \hat H_i^{\mathcal{K}}.
\end{align}
Although the supports of local terms $\hat H_i^{\mathcal{K}}$ are generally nonlocal, $\hat H^{\mathcal{K}}$ are still local in the sense that
\begin{align}
 [\hat H_i^\mathcal{K},\hat H_j^\mathcal{K}] = \hat\Pi^{\mathcal{K}}[\hat H_i,\hat H_j]\hat\Pi^{\mathcal{K}} = 0,
\end{align}
for sufficiently distant $i$ and $j$.
If there exists a ground state $|\Psi^{\mathcal{K}}\rangle \in \mathcal{K}$ of $\hat H^{\mathcal{K}}$, we say that $\hat H$ is FF for the invariant subspace $\mathcal{K}$.
In this case, the Gosset-Huang inequality for the invariant subspace $\mathcal{K}$ is written as
\begin{align}&
 \frac{\langle\mathcal{O}|(\hat\Pi^{\mathcal{K}}-\hat G^{\mathcal{K}})|\mathcal{O}'\rangle}{\||\mathcal{O}\rangle\| \||\mathcal{O}'\rangle\|} \nonumber\\
    &
 \le 2 \exp\left(-\left( \frac{D(|\mathcal{O}\rangle,|\mathcal{O}'\rangle)-1}{c-1}-2\right) \sqrt{\frac{\epsilon_{\mathcal{K}}}{g^2+\epsilon_{\mathcal{K}}}}\right),
    \label{GH inequality for invariant subspace}
\end{align}
where $\hat G^{\mathcal{K}}$ is the projector onto the ground space of $\hat H^{\mathcal{K}}$, and $\epsilon_{\mathcal{K}}$ represents the dimensionless spectral gap of $\hat H^{\mathcal{K}}$.

We can define the dynamical exponent $z_\mathcal{K}$ for an invariant subspace $\mathcal{K}$ by $\epsilon_\mathcal{K} \sim L^{-z_\mathcal{K}}$.
Theorem~\ref{thm: main} implies that $z_\mathcal{K} \ge 2$ if we can find excitations $|\mathcal{O}\rangle$ and $|\mathcal{O}'\rangle$ such that the left-hand side of Eq.~\eqref{GH inequality for invariant subspace} shows power-law decay.

\label{sec: non-ergodic RK}
\subsection{Degeneracy of the ground states} 
Let us consider the degeneracy of the ground states of an RK Hamiltonian.
Let $\mathcal{G}$ be a graph where vertices are labeled by the classical configurations $C$ and there is an edge between $C$ and $C'$ if $\langle C'|\hat H|C\rangle < 0$.
If $\mathcal{G}$ is connected, the Perron-Frobenius theorem shows that the Hamiltonian has a unique ground state.

If $\mathcal{G}$ is not connected, the Perron-Frobenius theorem applies to each connected component $\mathcal{G}^{(n)}$.
The Hilbert space can be decomposed into a direct sum as
\begin{align}
 \mathcal{H} = \bigoplus_n \mathcal{H}^{(n)},\quad \mathcal{H}^{(n)} = \Span\{|C^{(n)}\rangle\}_{C^{(n)} \in \mathcal{G}^{(n)}}.
\end{align}
The Hamiltonian $\hat{H}$ and the time evolution operator ${{e}}^{-\hat{H} t}$ are block diagonalized by this decomposition as, respectively,
\begin{align}
 \hat{H} = \bigoplus_n \hat H^{(n)},\quad {{e}}^{-\hat{H}t} = \bigoplus_n {{e}}^{-\hat{H}^{(n)}t}.
\end{align} 
Since $\hat{H}$ is positive semidefinite, $\hat{H}^{(n)}$ are also positive semidefinite.
We can easily construct the ground state of $\hat{H}^{(n)}$ with zero eigenvalue as
\begin{align}
 |\Psi_\mathrm{RK}^{(n)}\rangle \coloneqq \frac1{\sqrt{Z^{(n)}}} \sum_{C^{(n)}} \sqrt{w(C^{(n)})}|C^{(n)}\rangle, \label{def of n-th ground state of RK Hamiltonian}
\end{align}
where
\begin{align}
 Z^{(n)} \coloneqq \sum_{C^{(n)}} w(C^{(n)}).
\end{align}
Using the Perron-Frobenius theorem for each ${{e}}^{-\hat H^{(n)}t}$, we find that there are no other ground states than $|\Psi_\mathrm{RK}^{(n)}\rangle$.

If an arbitrary configuration is obtained with multiple local updates from a particular configuration, we can construct an ergodic local RK Hamiltonian with a unique ground state.
However, if there are conserved charges or topological invariants that remain unchanged under any local updates, the configuration space and Hilbert space decompose into distinct charge sectors or topological sectors, each with its ground state.
For instance, the spin-1/2 ferromagnetic Heisenberg model can be interpreted as an RK Hamiltonian describing the diffusion process of a 1D lattice gas that conserves the total particle number, known as the symmetric simple exclusion process.
This model has the ground-state degeneracy for each sector defined by the total particle number (or, equivalently, the total magnetization).

Another example is the Rokhsar-Kivelson point of the quantum dimer model~\cite{rokhsarSuperconductivityQuantumHardCore1988}, which has degenerate ground states classified by a topological invariant that takes values in the $\mathbb{Z}_2$ homology group.
In such cases, we restrict the configuration space or the Hilbert space to a sector with a fixed charge or topological invariant to use Theorem~\ref{thm: main}.
Within each sector, the RK Hamiltonian is ergodic, and the ground state becomes unique.

\subsection{RK Hamiltonians with conserved charges}
Let us consider the translationally invariant RK Hamiltonians with a $\mathrm{U}(1)$ symmetry.
This class includes the ferromagnetic Heisenberg model and the Kawasaki dynamics~\cite{kawasakiDiffusionConstantsCritical1966} in the disordered phase of the Ising model.
Let $\hat N = \sum_{i \in \Lambda} \hat n_i$ be the generator of global $\mathrm{U}(1)$ rotation, representing the particle number.
We assume that the classical configuration basis $\{|C\rangle\}$ are eigenstates of $\hat N$.
An RK Hamiltonian that conserves the particle number is obtained by eliminating the off-diagonal elements that change the particle number from an unconstrained RK Hamiltonian.
The ground states are given by
\begin{align}
 |\Psi_N\rangle = \sum_{C \in \mathcal{C}_N}\sqrt{\frac{w(C)}{Z_N}}|C\rangle,\quad
 Z_N = \sum_{C \in \mathcal{C}_N} w(C),
\end{align}
where $\mathcal{C}_N$ is the set of configurations with $N$ particles and $Z_N$ is the partition function for the $N$-particle sector.
We consider the infinite-volume limit $|\Lambda| \to \infty$ while fixing the density $\rho \coloneqq N/|\Lambda|$.

For computational convenience, we consider the ground state corresponding to the grand canonical distribution
\begin{align}
 |\Psi(\mu)\rangle = \sum_N \sqrt{\frac{Z_N{{e}}^{\beta\mu N}}{\Xi}} |\Psi_N\rangle,\quad
 \Xi = \sum_N Z_N {{e}}^{\beta\mu N},
\end{align}
where $\beta$ is the inverse temperature, $\mu$ is the chemical potential, and $\Xi$ is the grand partition function.
Let $\langle \cdot \rangle \coloneqq \langle\Psi(\mu)|\cdot|\Psi(\mu)\rangle$ and $\langle \cdot \rangle_N \coloneqq \langle\Psi_N|\cdot|\Psi_N\rangle$.
We assume that the density-density correlation function for the grand canonical distribution decays exponentially:
\begin{align}
 \langle n_i n_j \rangle - \langle n_i \rangle\langle n_j \rangle \approx C{{e}}^{-D(\hat n_i,\hat n_j)/\xi}.
    \label{assumption of cluster property}
\end{align}
Otherwise, the model becomes gapless without imposing the conservation law.
The relation between the density-density correlation function for the grand canonical distribution and the canonical distribution is given by
\begin{align}&
\langle n_i n_j \rangle = \langle \langle n_in_j \rangle_N \rangle \nonumber\\
&
= \sum_{m=0}^{\infty} \frac{\langle(N-\langle N \rangle)^m\rangle}{m!}
\left.\frac{\partial^m}{\partial N^m}\langle n_in_j \rangle_{N}\right|_{N=\langle N\rangle} \nonumber\\
&
\approx \langle n_i n_j \rangle_{\langle N \rangle} + \frac{\langle N^2 \rangle-\langle N \rangle^2}{2}
\frac{\partial^2}{\partial \langle N\rangle^2}\langle n_in_j \rangle_{\langle N\rangle} \nonumber\\
&
\approx \langle n_i n_j \rangle_{\langle N \rangle} + \frac{\langle N^2 \rangle-\langle N \rangle^2}{2}
\frac{\partial^2}{\partial \langle N\rangle^2}\langle n_in_j \rangle,
\label{relation between canonical and grand canonical}
\end{align}
where we dropped the higher-order terms.
We also assume that the translation symmetry is not spontaneously broken.
Then, we have $\langle n_i \rangle = \langle N \rangle/|\Lambda| = \rho$.
The dominant contribution to $\langle n_i n_j \rangle_{\langle N\rangle}$ is given by $\langle n_i \rangle\langle n_j \rangle = \langle N \rangle^2/|\Lambda|^2$ because of the cluster property in Eq.~\eqref{assumption of cluster property}.
Thus, the second term of Eq.~\eqref{relation between canonical and grand canonical} is evaluated as
\begin{align}
 \frac{\langle N^2 \rangle-\langle N \rangle^2}{2}\frac{\partial^2}{\partial\langle N \rangle^2} \langle n_i n_j \rangle
    &
 \approx \frac{\langle N^2 \rangle-\langle N \rangle^2}{|\Lambda|^2} \nonumber\\
    &
 = \frac{1}{\beta|\Lambda|}\frac{\partial\rho(\mu,\beta)}{\partial\mu},
\end{align}
where $\partial\rho/\partial\mu$ is the compressibility.
Therefore, we have
\begin{align}&
 |\langle\Psi_{\langle N \rangle}|\hat n_i(\hat{\mathbbm{1}}-\hat G)\hat n_j|\Psi_{\langle N \rangle}\rangle| \nonumber\\
    &
 = \left|\langle n_i n_j \rangle_{\langle N \rangle} - \langle n_i \rangle_{\langle N \rangle}\langle n_j \rangle_{\langle N \rangle} \right| \nonumber\\
    &
 \approx \left| \langle n_i n_j \rangle - \langle n_i \rangle\langle n_j \rangle - \frac{1}{\beta|\Lambda|}\frac{\partial\rho(\mu,\beta)}{\partial\mu}\right| \nonumber\\
    &
 = O\left( \frac1{|\Lambda|} \right).
\end{align}
Here, we used Eqs.~\eqref{assumption of cluster property} and~\eqref{relation between canonical and grand canonical}.
Since the normalization constant is $O(1)$, the normalized correlation function also decays as $O(1/|\Lambda|)$.
Therefore, the dynamical exponent satisfies $z \ge 2$ for all particle-number sectors except for the points where the compressibility vanishes.


\subsection{The RK point of the quantum dimer model}
\label{sec: RK point}

As an example of the RK Hamiltonian with topological degeneracy, we focus on the RK Hamiltonian of the classical dimer model, also known as the RK point of the quantum dimer model~\cite{rokhsarSuperconductivityQuantumHardCore1988}.
While this model is the origin of the RK Hamiltonian, it is a comparatively intricate example in that its Hilbert space is not a tensor product of local Hilbert spaces, and its ground states are degenerate.

Let us consider the 2D square lattice with linear size $L$ with periodic boundary conditions.
Let $l_e$ be a variable associated with the edge $e$, which takes the values of zero or one.
A classical configuration $C = \{l_e\}$ is called a dimer configuration if
\begin{align}
 l_{(x+1/2, y)}+l_{(x-1/2, y)}+l_{(x, y+1/2)}+l_{(x, y-1/2)} = 1 \label{dimer constraint}
\end{align}
for all vertices $(x, y) \in \mathbb{Z}_L^2$.
Here, the edges are labeled by the position of the midpoint.
We illustrate examples of dimer configurations in Fig.~\ref{fig: dimerConfiguration}, where solid edges represent $l=1$ and empty edges represent $l=0$.
The dimer constraint implies that precisely one solid line ends at each vertex.
The classical dimer model is defined by the equal weight probability distribution for all dimer configurations.

\begin{figure}[b]
    \centering
    \includegraphics[width=0.35\hsize]{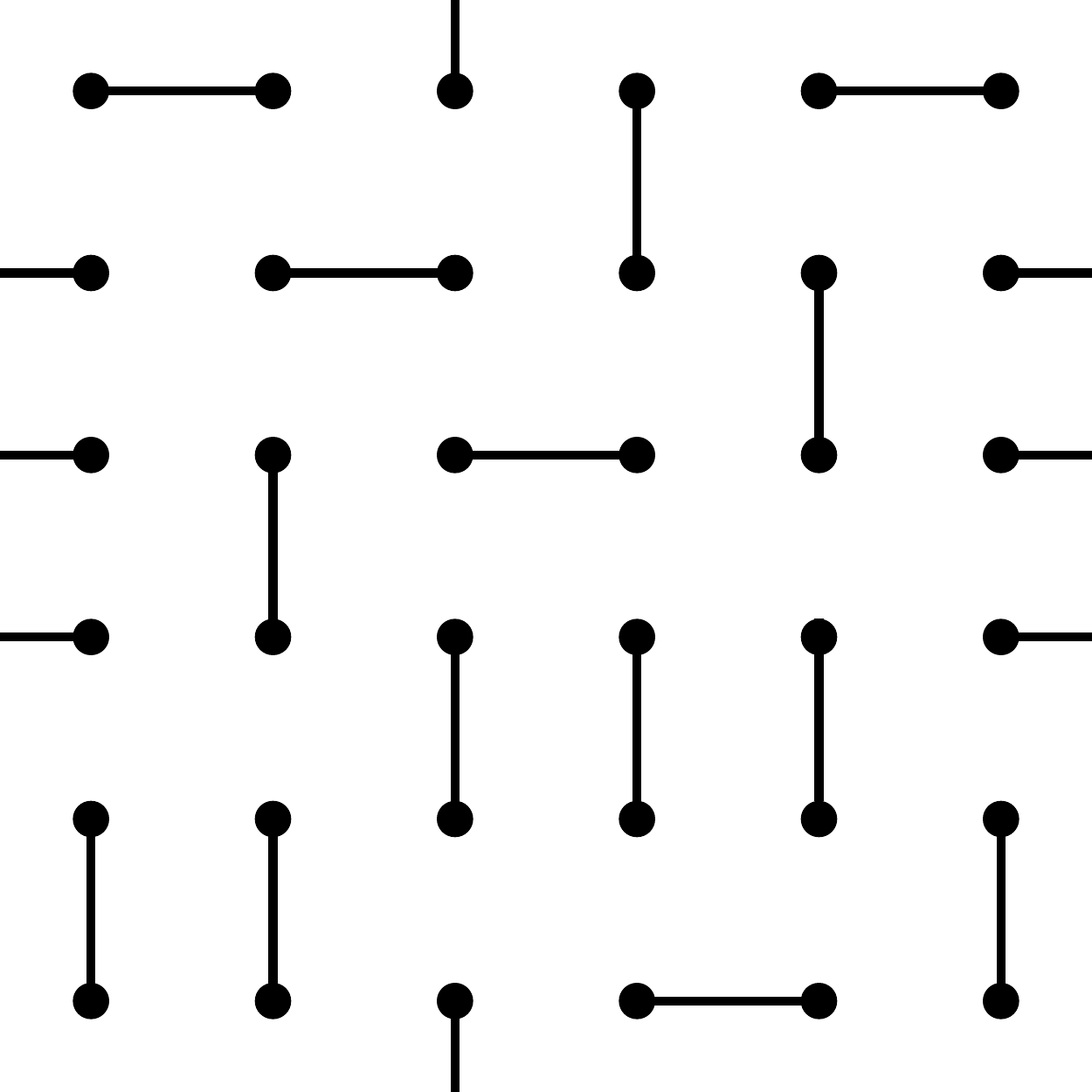}
    \caption{A dimer configuration on the square lattice.
    \label{fig: dimerConfiguration}}
\end{figure}

We consider the Hilbert space
\begin{align}
 \mathcal{H} = \bigotimes_{e\text{: edges}} \mathcal{H}_e,\quad
 \mathcal{H}_e = \Span\{|0_e\rangle, |1_e\rangle\}.
\end{align}
The Hamiltonian of the RK point of the quantum dimer model is given by
\begin{align}
 \hat H = \sum_{p\text{: plaquettes}}\hat H_p + \sum_{v\text{: vertices}} \alpha\hat\Pi_v.
 \label{Hamiltonian of RK quantum dimer}
\end{align}
Here, $\hat H_p$ is defined for each plaquette as
\begin{align}
 \hat H_p &\coloneqq \frac1{2}\left(|\Dimerpara\rangle-|\Dimerperp\rangle\right)
 \left(\langle\Dimerpara|-\langle\Dimerperp|\right).
\end{align}
The term $\hat{\Pi}_v$ is the projector onto the subspace that violates the dimer constraint in Eq.~\eqref{dimer constraint} around the vertex $v$, and $\alpha$ is a positive parameter.
By taking the limit $\alpha \to +\infty$, we restrict the Hilbert space to the subspace given by
\begin{align}
 \mathcal{H}_D = \Span\{|\{l_e\}\rangle \mid \{l_e\} \in D \},
\end{align}
where $D$ is the set of all dimer configurations.
The Hamiltonian for the restricted Hilbert space is given by
\begin{align}
    \hat H_D \coloneqq \sum_{p\text{: plaquettes}} \hat\Pi_D\hat H_p \hat\Pi_D,
    \label{Hamiltonian of modified RK quantum dimer}
\end{align}
where $\hat\Pi_D$ is the projector onto $\mathcal{H}_D$.
While the tensor product structure of the Hilbert space is broken by the dimer constraint, the Hamiltonian is still local in the sense that distant local terms commute with each other:
\begin{align}
 [\hat\Pi_D\hat H_p\hat\Pi_D, \hat\Pi_D\hat H_{p'}\hat\Pi_D]
    &
 = \hat\Pi_D[\hat H_p, \hat H_{p'}]\hat\Pi_D  \nonumber\\
    &
 = 0 \quad (p \cap p' = \emptyset).
    \label{locality of RK point}
\end{align}
Here, we used $\hat\Pi_D\hat H_p\hat\Pi_D = \hat\Pi_D\hat H_p = \hat H_p\hat\Pi_D$.
The Hamiltonian in Eq.~\eqref{Hamiltonian of modified RK quantum dimer} is an RK Hamiltonian since one of the ground states is written as
\begin{align}
 |\Psi_D\rangle \coloneqq \frac1{\sqrt{|D|}} \sum_{C \in D} |C\rangle,
 \label{GS of RK dimer}
\end{align}
where $|D|$ is the number of dimer configurations.
The Boltzmann weight $w(C)$ is given by that of the classical dimer model: $w(C) = 1$.
Therefore, the transition-rate matrix of the corresponding Markov process is given by $\hat{W} = -\hat{H}$.

The configuration space is decomposed into topological sectors as well as a set of isolated staggered configurations.~\cite{roisingErgodicArchimedeanDimers2023} illustrated in Fig.~\ref{fig: dimerConfigurations}.
Any state in the same topological sector can be obtained by repeating local transitions $\Dimerpara \leftrightarrow \Dimerperp$.
Staggered configurations depicted in Fig.~\ref{fig: staggered configuration} are isolated and do not admit any local transition.
\begin{figure}[t]
    \centering
    \begin{minipage}{0.3\hsize}
        \centering
        \includegraphics[width=0.8\hsize]{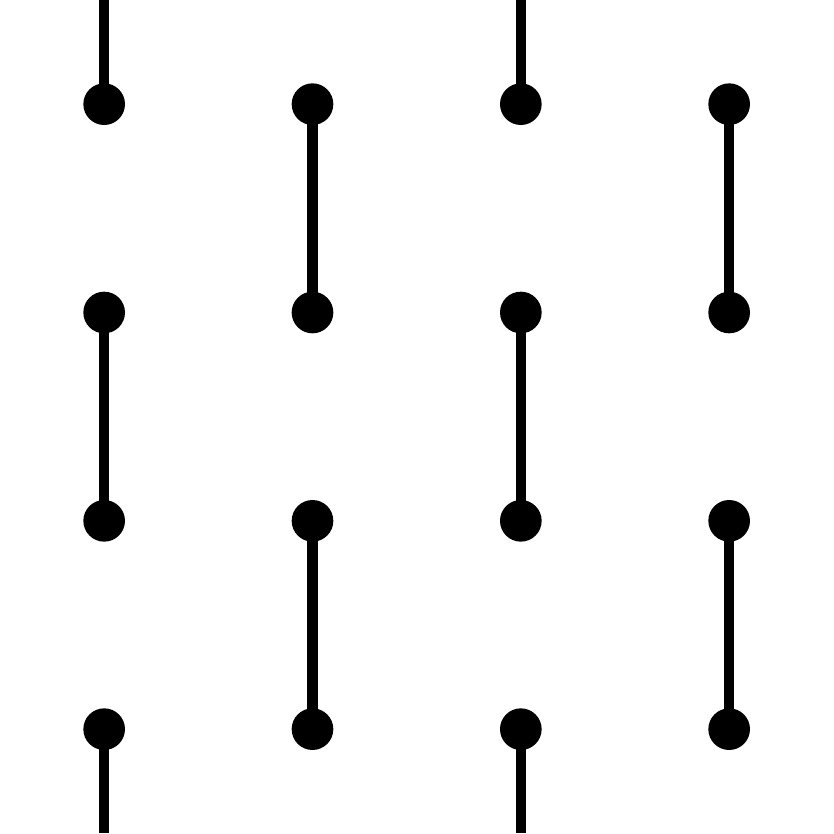}
        \subcaption{\label{fig: staggered configuration}}
    \end{minipage}
    \begin{minipage}{0.3\hsize}
        \centering
        \includegraphics[width=0.8\hsize]{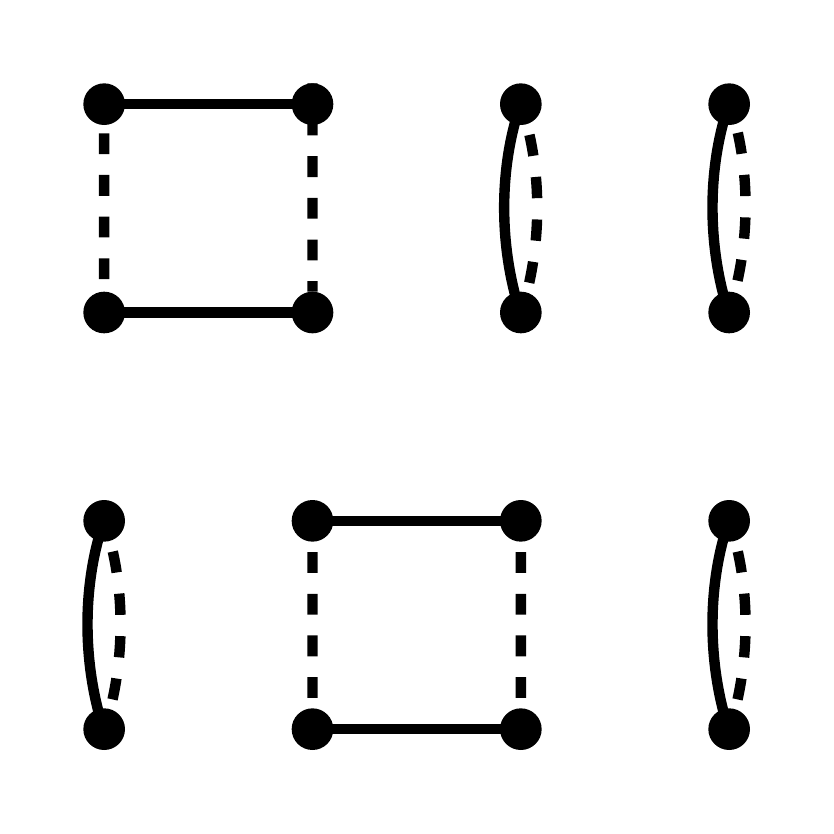}
        \subcaption{\label{fig: trivial configuration}}
    \end{minipage}
    \begin{minipage}{0.3\hsize}
        \includegraphics[width=0.8\hsize]{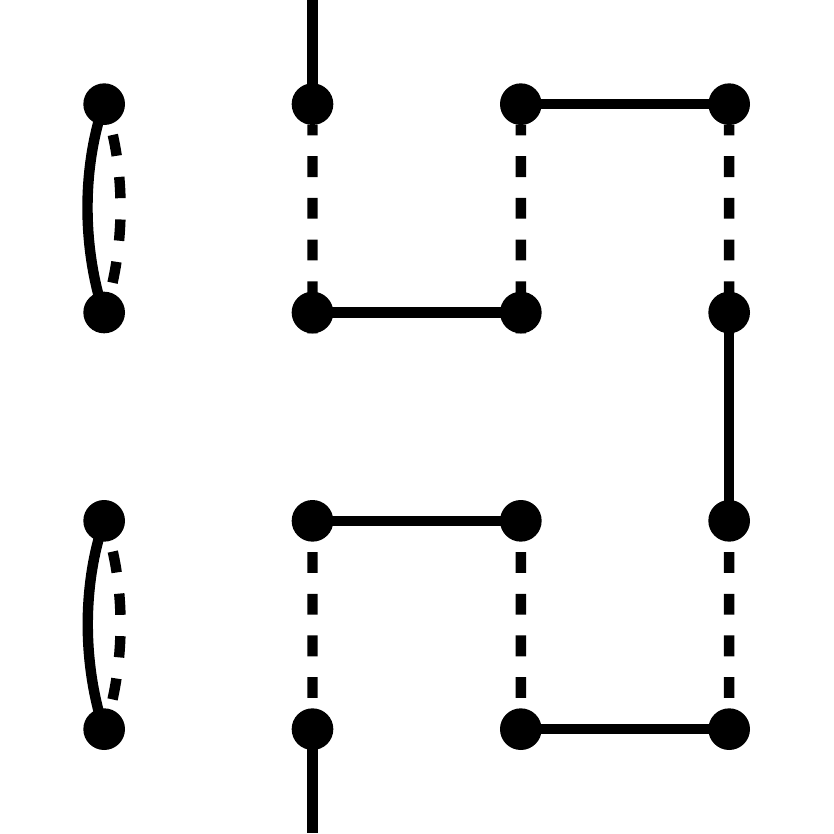}
        \subcaption{\label{fig: nontrivial configuration}}
    \end{minipage}
\caption{(a) A staggered configuration.
(b) A dimer configuration (solid line) in the same topological sector as the reference configuration (dashed line).
(c) A dimer configuration (solid line) in a topological sector that is different from the reference configuration (dashed line).\label{fig: dimerConfigurations}}
\end{figure}
Each topological sector $\Omega$ has a unique ground state given by
\begin{align}
 |\Psi_\Omega\rangle \coloneqq \frac1{\sqrt{|\Omega|}} \sum_{C \in \Omega} |C\rangle.
\end{align}
The states corresponding to staggered configurations are also ground states by themselves.
We can define an ergodic Hamiltonian for each topological sector $\Omega$ by projecting the Hamiltonian onto the subspace $\Span\{|C\rangle \mid C \in \Omega\}$.
This Hamiltonian is also local, as shown in Eq.~\eqref{locality of RK point}.
The Hamiltonians for each staggered configuration are trivial because the Hilbert space is one-dimensional.

Excluding staggered configurations, correlation functions on the ground states exhibit power-law decay.
For example, the asymptotic behavior of the correlation function for two horizontal edges in the same row is given by~\cite{fisherStatisticalMechanicsDimers1963}
\begin{align}
 |\langle\Psi_\Omega|\hat l_{(1/2, 0)}(\hat{\mathbbm{1}}-|\Psi_\Omega\rangle\langle\Psi_\Omega|) \hat l_{(r+1/2, 0)}|\Psi_\Omega\rangle |
 \approx  \frac{\mathrm{const}}{r^2},
\end{align}
where $\hat l_{(x, y)}$ is the operator that gives the filling number (0 or 1) of the edge whose midpoint is at $(x, y) \in \mathbb{R}^2$.
Therefore, our bound $z \ge 2$ holds for each topological sector.

A numerical study shows $z = 2.01(2)$ for the RK point in the square lattice~\cite{isakovDynamicsConformalQuantum2011}.
We can also obtain the exact dynamical exponent $z = 2$ from the effective field theory.
Dimer configurations on the square lattice can be described by height variables~\cite{henleyRelaxationTimeDimer1997}.
In the continuum limit, the height variable description is identified with the effective field theory called the quantum Lifshitz model~\cite{henleyRelaxationTimeDimer1997, ardonneTopologicalOrderConformal2004}, whose action is given by
\begin{align}
 S = \int\mathinner{\mathrm{d}^3x} \left(\frac1{2}(\partial_t\varphi)^2 + \frac{\kappa^2}{2}(\nabla^2\varphi)^2\right),
\end{align}
where $\varphi$ is the compactified scalar field.
The RK point corresponds to $\kappa = 1/(2\pi)$.
The canonical dimensional analysis yields the exact value of the dynamical exponent $z = 2$.

\section{Lower bound by plane-wave variational state}\label{sec: variational bound}
This section gives a lower bound on the dynamical exponent for translation-invariant gapless FF systems by plane-wave variational state.
This approach yields the bound $z\ge 2$ for systems that possess a plane-wave ground state, a typical example being systems with spontaneous breaking of a continuous symmetry.
For the RK Hamiltonians of critical points, the same variational method yields the bound $z \ge 2-\eta$, where $\eta$ is the anomalous dimension of the corresponding classical critical point.
This bound coincides with previously established rigorous results in Refs.~\cite{abeDynamicsIsingModel1969,halperinRigorousInequalitiesSpinRelaxation1973,lubetzkyCriticalIsingSquare2012}.

Let us consider a translationally invariant FF Hamiltonian $\hat{H} = \sum_i \hat{H}_i$ on a $d$-dimensional lattice with a ground state $|\Psi\rangle$.
We consider the following plane-wave state.
\begin{align}
 |\mathcal{O}(\bm{k})\rangle = \sum_i {{e}}^{{{i}}\bm{k}\cdot\bm{x}_i}\hat{\mathcal{O}}_i |\Psi\rangle.
\end{align}
We assume $\langle\Psi|\hat{\mathcal{O}}_i|\Psi\rangle = 0$ by subtracting a constant from $\hat{\mathcal{O}}_i$.
Then, the variational state $|\mathcal{O}(\bm{k})\rangle$ is orthogonal to the ground state $|\Psi\rangle$.
We assume that $|\mathcal{O}(\bm{k})\rangle$ is also orthogonal to other ground states except for a finite number of momentum values $\bm{k} = \bm{k}_1, \bm{k}_2, \ldots, \bm{k}_n$.
Each local term $\hat{H}_i$ of the Hamiltonian can be transformed into a projector without altering the ground-state subspace or the dynamical exponent of the system.
Assuming that each $\hat{H}_i$ is a projector, the energy expectation of this state is evaluated as
\begin{align}
 \frac{\langle\mathcal{O}(\bm{k})|\hat H|\mathcal{O}(\bm{k})\rangle}{\langle\mathcal{O}(\bm{k})|\mathcal{O}(\bm{k})\rangle}
 &
 = \sum_i \frac{\langle\mathcal{O}(\bm{k})|\hat H_i^2|\mathcal{O}(\bm{k})\rangle}{\langle\mathcal{O}(\bm{k})|\mathcal{O}(\bm{k})\rangle} \nonumber\\
 &
 = L^d\frac{\| \hat H_i|\mathcal{O}(\bm{k})\rangle\|^2}{\||\mathcal{O}(\bm{k})\rangle\|^2} \nonumber\\
 &
 = \frac{\| \sum_j {{e}}^{{{i}}\bm{k}\cdot\bm{x}_j}\hat H_i\hat{\mathcal{O}}_j|\Psi\rangle \|^2}{\||\mathcal{O}(\bm{k})\rangle\|^2/L^d}\nonumber\\
 &
 = \frac{\| \sum_{j} {{e}}^{{{i}}\bm{k}\cdot\bm{x}_j}[\hat H_i,\hat{\mathcal{O}}_j]|\Psi\rangle \|^2}{S_{\mathcal{O}}(\bm{k})}, \label{energy expectation of single-particle excitation}
\end{align}
where we used FF condition $\hat{H_i}|\Psi\rangle = 0$ and $S_{\mathcal{O}}(\bm{k})$ is a correlation function in momentum space defined as
\begin{align}
 S_{\mathcal{O}}(\bm{k}) \coloneqq \frac{\||\mathcal{O}(\bm{k})\rangle\|^2}{L^d} = \sum_j {{e}}^{{{i}}\bm{k}\cdot(\bm{x}_j-\bm{x}_i)} \langle\Psi|\mathcal{O}_i^\dagger \mathcal{O}_j|\Psi\rangle.
 \label{def of G}
\end{align}
The variational state $|\mathcal{O}(\bm{k})\rangle$ is a gapless excitation if the numerator of Eq.~\eqref{energy expectation of single-particle excitation} converges to zero at $|\bm{k}-\bm{k}_0| \to 0$, or the denominator diverges in the thermodynamic limit.

First, we consider the case that $|\mathcal{O}(\bm{k}_0)\rangle$ is a ground state.
For simplicity, we take $\bm{k}_0 = \bm{0}$.
We have
\begin{align}
 \hat H_i |\mathcal{O}(\bm{0})\rangle = \sum_{j} [\hat H_i,\hat{\mathcal{O}}_j]|\Psi\rangle = 0.
\label{energy of NG modes}
\end{align}
We can evaluate the numerator of Eq.~\eqref{energy expectation of single-particle excitation} using Eq.~\eqref{energy of NG modes} as
\begin{align}&
 \| \sum_{j} {{e}}^{{{i}}\bm{k}\cdot\bm{x}_j}[\hat H_i,\hat{\mathcal{O}}_j]|\Psi\rangle \|^2 \nonumber\\
 &
 = \| \sum_{j} ({{e}}^{{{i}}\bm{k}\cdot\bm{x}_j}-{{e}}^{{{i}}\bm{k}\cdot\bm{x}_i})[\hat H_i,\hat{\mathcal{O}}_j]|\Psi\rangle \|^2 \nonumber\\
 &
 = O(|\bm{k}|^2).
\end{align}
Here, we used the fact that $[\hat H_i,\hat{\mathcal{O}}_j]$ is nonzero only if $\bm{x}_j$ is in a finite region around $\bm{x}_i$.
The state $|\mathcal{O}(\bm{k})\rangle$ is orthogonal to ground states if $\bm{k} \ne \bm{0}$ and $\bm{k}$ are sufficiently small.
Thus, we obtain
\begin{align}
 \epsilon \le \frac{\langle\mathcal{O}(\bm{k})|\hat H|\mathcal{O}(\bm{k})\rangle}{\langle\mathcal{O}(\bm{k})|\mathcal{O}(\bm{k})\rangle} = \frac{O(|\bm{k}|^2)}{S_{\mathcal{O}}(\bm{k})}.
 \label{lower bound by Nambu Goldstone modes}
\end{align}
To obtain $\epsilon \le O(|\bm{k}|^2)$, we have to assume that
\begin{align}
    \lim_{\bm{k} \to \bm{0}} S_{\mathcal{O}}(\bm{k}) \ne 0.
\end{align}

Next, we consider the case of the RK Hamiltonians of critical points with a unique ground state.
Let $\hat{\mathcal{O}}_i$ be a real diagonal operator that corresponds to a primary field in the associated conformal field theory, excluding the identity operator.
Since the ground state $|\Psi\rangle$ is critical, $S_{\mathcal{O}}(\bm{0})$ diverges as
\begin{align}
 S_{\mathcal{O}}(\bm{0}) = \sum_j \langle\mathcal{O}_i\mathcal{O}_j\rangle \propto L^{d-2\Delta_{\mathcal{O}}},
\end{align}
where $\Delta_{\mathcal{O}}$ is the scaling dimension of $\hat{\mathcal{O}}_i$ and $L$ is the linear system size.
Thus
\begin{align}
 \frac{\langle\mathcal{O}(\bm{0})|\hat H|\mathcal{O}(\bm{0})\rangle}{\langle\mathcal{O}(\bm{0})|\mathcal{O}(\bm{0})\rangle} = \frac{O(L^d)}{L^d S_{\mathcal{O}}(\bm{0})} = O(L^{-(d-2\Delta_{\mathcal{O}})}).
\end{align}
Since $|\mathcal{O}(\bm{0})\rangle$ is orthogonal to ground states, we obtain
\begin{align}
 z \ge d - 2\min_{\mathcal{O} \ne 1}\Delta_{\mathcal{O}} = 2-\eta,
 \label{Halperin bound}
\end{align}
where $\eta$ is the anomalous dimension.
This bound is equivalent to the long-known rigorous bound $\Delta \ge \gamma$~\cite{abeDynamicsIsingModel1969,halperinRigorousInequalitiesSpinRelaxation1973}.
For example, the 2+1D Ising conformal quantum critical point satisfies $z \ge 7/4$, because the lowest scaling dimension of the primary operator in the 2D Ising conformal field theory, excluding the identity, is $1/8$.
The same bound $z \ge 7/4$ is obtained in Ref.~\cite{lubetzkyCriticalIsingSquare2012} using essentially the same approach.
A numerical calculation shows that $z = 2.1667(5)$~\cite{nightingaleMonteCarloComputation2000}.


For unitary conformal field theories and scalar operators, the unitarity bound is given~\cite{minwallaRestrictionsImposedSuperconformal1998} by
\begin{align} &
 \Delta_{\mathcal{O}} \ge \frac{1}{2}(d-2).
\label{unitarity bound}
\end{align}
Thus, we obtain $\eta \ge 0$.
This bound also holds for spinning operators.
Therefore, except when the equality in Eq.~\eqref{unitarity bound} holds, our result $z \ge 2$ improves the lower bound by plane wave variational states in the case of the RK Hamiltonians, whose ground state corresponds to a unitary conformal field theory.
We note that finite particle-number variational states also yield the same bound as Eq.~\eqref{Halperin bound}.
The fact that numerically observed values of $z$ are larger than $2-\eta$ suggests that the true low-energy excitations in these critical RK Hamiltonians are not welldescribed by these single-particle plane-wave states and may possess a more complex, nonlocal character.

\section{Parent Hamiltonians of critical projected entangled pair states}
\label{sec: PEPS}

Projected entangled pair states (PEPS)~\cite{verstraeteRenormalizationAlgorithmsQuantumMany2004,verstraeteCriticalityAreaLaw2006} represent a more general class of ground states of the FF Hamiltonian than those of RK Hamiltonians.
We note that the ground states of RK Hamiltonians in Eq.~\eqref{Boltzmann state} written in the Boltzmann weight can also be described as PEPS~\cite{verstraeteCriticalityAreaLaw2006}.
Importantly, PEPS does not require the wavefunction to be real. However, the effect of the complex phases of the ground states on the universal properties of quantum critical points is not yet fully understood.

To evaluate the correlation function in the Gosset-Huang inequality for PEPS, one must identify the complete set of ground states.
Therefore, establishing the bound $z \ge 2$ requires theoretical control over ground-state degeneracy.
Various conditions that determine the ground-state degeneracy of parent Hamiltonians are known~\cite{perez-garciaPEPSUniqueGround2007,schuchPEPSGroundStates2010,buerschaperTwistedInjectivityProjected2014,sahinogluCharacterizingTopologicalOrder2021}.
In this section, we review one such condition, known as injectivity~\cite{perez-garciaPEPSUniqueGround2007}.

\subsection{Definitions}
A PEPS is a quantum state specified by a graph and a tensor located at each vertex of the graph.
The tensor $A^{(i)}$ on a vertex $i$ has a physical index $s_i$ and virtual indices $\sigma_{\langle i,i_1\rangle},\ldots,\sigma_{\langle i,i_n\rangle}$, which correspond to the edges incident to the vertex $i$.
A PEPS $|\Psi(A)\rangle$ is given by
\begin{align}
 |\Psi(A)\rangle = \sum_{\{s\},\{\sigma\}} \prod_{i}
 A^{(i)}_{s_i \sigma_{\langle i,i_1\rangle} \cdots \sigma_{\langle i,i_n\rangle}} |\{s\}\rangle.
 \end{align}
Here, all pairs of virtual spins representing the same edge are contracted.
For example, PEPS for the square lattice is depicted as
\begin{align}
 |\Psi(A)\rangle = \vcenter{\hbox{\includegraphics[scale=0.13]{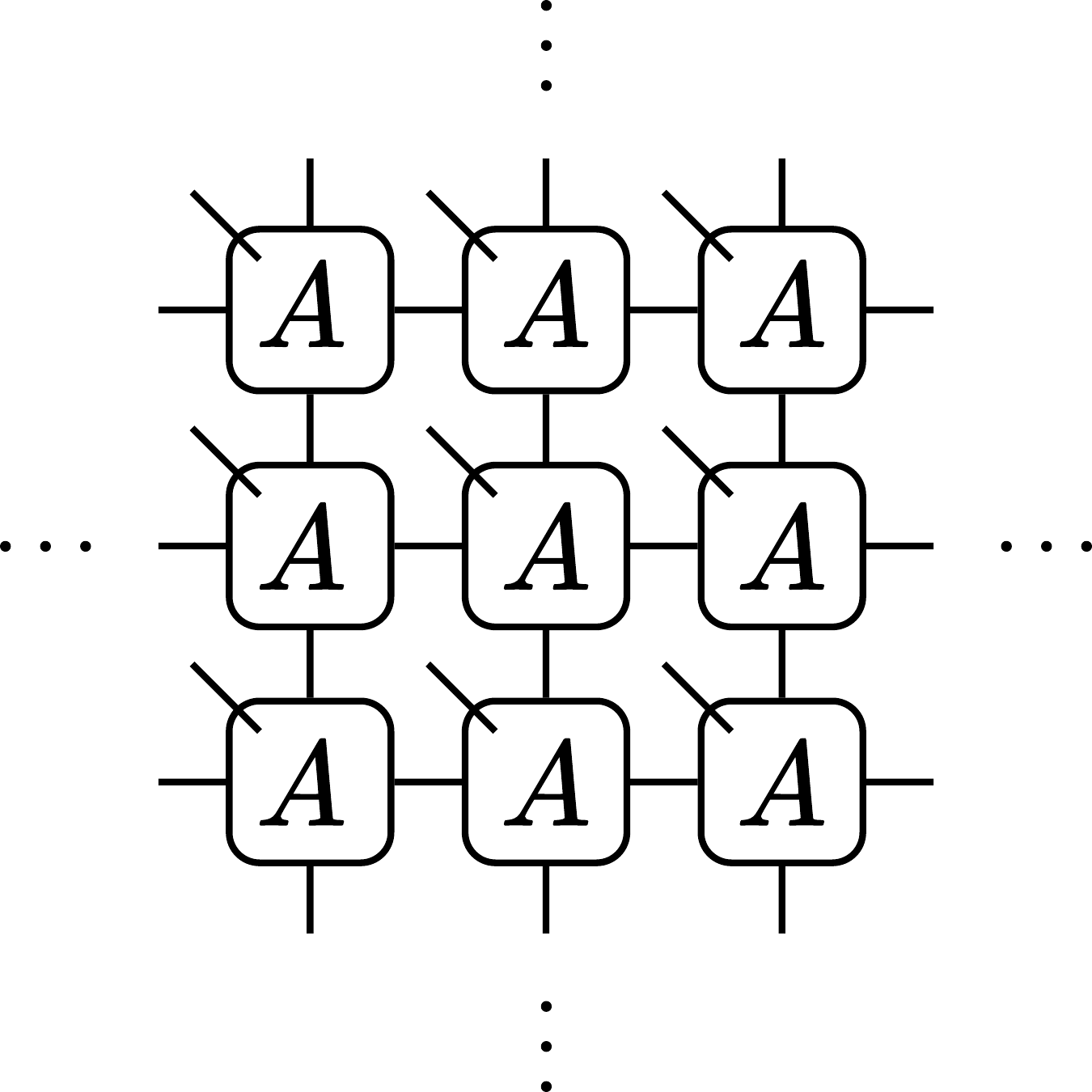}}},
\end{align}
where the diagonal lines represent physical spins and the horizontal and vertical lines represent virtual spins.
We note that each $A^{(i)}$ need not be the same tensor.

For a finite region $R_i$ around the point $i$, we consider the reduced density matrix defined as
\begin{align}
 \rho_{R_i} \coloneqq\operatorname{Tr}_{\mathcal{S}\setminus R_i}|\Psi(A)\rangle\langle\Psi(A)|,
\end{align}
where $\mathcal{S}\setminus R_i$ is the region excluding $R_i$ from the graph.
The local term $\hat H_i$ of the parent Hamiltonian for a PEPS $|\Psi(A)\rangle$ and an interaction range $R_i$ is the projector such that
\begin{align}
 \ker \hat H_i = \operatorname{supp}(\rho_{R_i}),
\end{align}
where $\operatorname{supp}(\rho_{R_i})$ is the support space of $\rho_{R_i}$.

\subsection{Injectivity}
Injectivity is a property that guarantees the uniqueness of the PEPS ground state.
For a region $R$ on the graph, the linear map $\Gamma_R$ is defined by
\begin{align}
 \Gamma_R(C) &\coloneqq \sum_{\{s_R\}}\sum_{\{\sigma_R\}}\sum_{\{\sigma_{\partial R}\}}C_{\{\sigma_{\partial R}\}} \nonumber\\
 &\quad
 \times\left(\prod_{i \in R}A^{(i)}\right)_{\{s_R\}\{\sigma_R\}\{\sigma_{\partial R}\}}|\{s_R\}\rangle,
\end{align}
where $\{s_R\}$ is the set of physical spins on $R$, $\{\sigma_{\partial R}\}$ is the set of virtual spins on the boundary of $R$, and $\{\sigma_R\}$ is the set of virtual spins inside $R$.
For example, if tensors are on a square lattice and $R$ is a plaquette consisting of four vertices, then $\Gamma_R(C)$ can be depicted as
\begin{align}
 \Gamma_{\vcenter{\hbox{\includegraphics[scale=0.07]{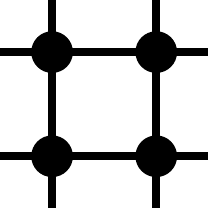}}}}(C) = \vcenter{\hbox{\includegraphics[scale=0.13]{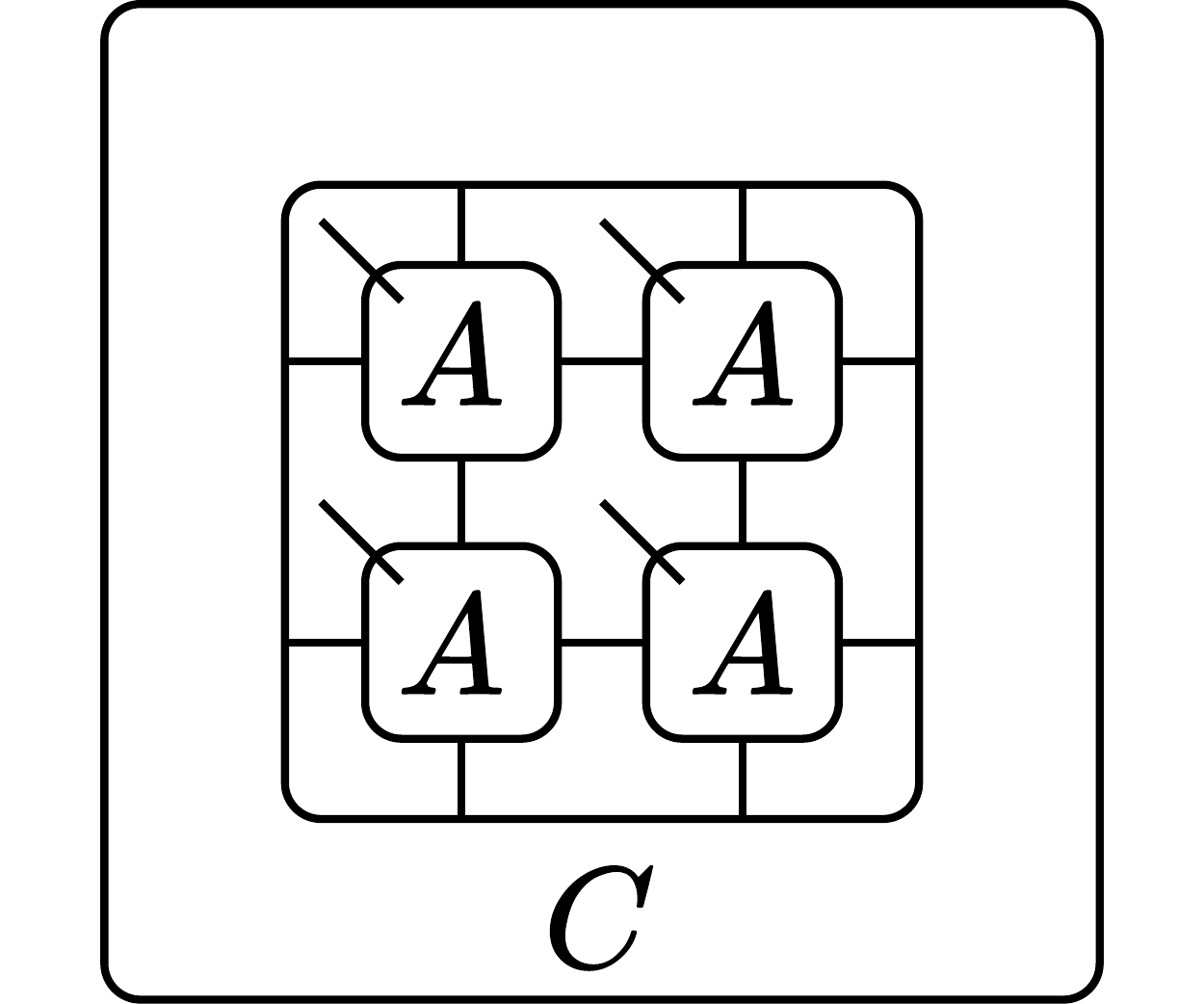}}}.
\end{align}
A region $R$ is called injective if $\Gamma_R$ is an injective map.
Also, a PEPS is injective if there is a proper covering of disjoint injective regions.
We consider a super-lattice such that each vertex is an injective region.
The ground state of the parent Hamiltonian with an injective PEPS is unique if the interaction range contains two adjacent vertices of the super-lattice~\cite{perez-garciaPEPSUniqueGround2007}.
This leads to the following corollary of Theorem~\ref{thm: main}.

\begin{cor}\label{cor: PEPS}
 For parent Hamiltonians of injective PEPS with a sufficiently large interaction range, if the ground state exhibits algebraic correlation functions, the dynamical exponent must satisfy $z \ge 2$.
\end{cor}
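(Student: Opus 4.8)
The plan is to obtain the statement as a direct application of Theorem~\ref{Thm main}, mirroring the proof of Corollary~\ref{cor: RK}. Two ingredients are needed: that the ground state is unique, and that criticality supplies local operators with a polynomially decaying connected correlation function. For the first, I would invoke injectivity together with the assumption of a sufficiently large interaction range: choosing the range so that it contains two adjacent injective super-lattice vertices, the uniqueness result of Ref.~\cite{perez-garciaPEPSUniqueGround2007} guarantees that $|\Psi(A)\rangle$ is the unique ground state of the parent Hamiltonian. Consequently the ground-space projector is the rank-one operator $\hat G = |\Psi(A)\rangle\langle\Psi(A)|$.

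With $\hat G$ fixed in this way, the key step is to recognize that the numerator in the hypothesis of Theorem~\ref{Thm main} collapses to a connected correlation function. Since $\hat{\mathbbm{1}}-\hat G = \hat{\mathbbm{1}}-|\Psi(A)\rangle\langle\Psi(A)|$,
\begin{align}
\langle\Psi(A)|\hat{\mathcal{O}}_{\bm{x}}^\dagger(\hat{\mathbbm{1}}-\hat G)\hat{\mathcal{O}}'_{\bm{y}}|\Psi(A)\rangle
&= \langle\Psi(A)|\hat{\mathcal{O}}_{\bm{x}}^\dagger\hat{\mathcal{O}}'_{\bm{y}}|\Psi(A)\rangle \nonumber\\
&\quad- \langle\Psi(A)|\hat{\mathcal{O}}_{\bm{x}}^\dagger|\Psi(A)\rangle\langle\Psi(A)|\hat{\mathcal{O}}'_{\bm{y}}|\Psi(A)\rangle,
\end{align}
which is exactly the connected correlator of $\hat{\mathcal{O}}_{\bm{x}}$ and $\hat{\mathcal{O}}'_{\bm{y}}$. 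By the definition of a critical PEPS, there exist local operators whose connected correlator decays as $\Theta(|\bm{x}-\bm{y}|^{-2\Delta_{\mathcal{O}}})$ for some finite exponent $\Delta_{\mathcal{O}}$. The normalizations $\|\hat{\mathcal{O}}_{\bm{x}}|\Psi(A)\rangle\|$ and $\|\hat{\mathcal{O}}'_{\bm{y}}|\Psi(A)\rangle\|$ are expectation values of local positive operators and are $\Theta(1)$, so they do not alter the decay rate. Taking $|\bm{x}-\bm{y}| = \Theta(L)$ then makes the left-hand side of the hypothesis $\Theta(L^{-2\Delta_{\mathcal{O}}}) = \Omega(L^{-\Delta})$ with $\Delta = 2\Delta_{\mathcal{O}} > 0$, and Theorem~\ref{Thm main} yields $z \geq 2$.

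The corollary therefore carries little independent content: the substance lives in the cited uniqueness theorem and in Theorem~\ref{Thm main}, and the argument is a substitution once those are in hand. The one point I would be careful to state explicitly is the translation of the informal criticality condition---``a power-law decaying correlation function''---into the normalized quantity demanded by Theorem~\ref{Thm main}: I must confirm that the decay is genuinely polynomial rather than stretched-exponential, and that the operators can be chosen with normalizing norms bounded away from zero. For a generic critical injective PEPS these are mild conditions, and verifying them is the only nonmechanical part of the proof.
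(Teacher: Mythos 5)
Your proposal is correct and follows essentially the same route as the paper: the paper also treats this corollary as an immediate consequence of Theorem~\ref{Thm main} once the uniqueness result of Ref.~\cite{perez-garciaPEPSUniqueGround2007} (injectivity plus an interaction range containing two adjacent super-lattice vertices) fixes $\hat G = |\Psi(A)\rangle\langle\Psi(A)|$, so that the quantity in Eq.~\eqref{assumption of main thm} reduces to the connected correlator, which is $\Omega(L^{-\Delta})$ by criticality. Your explicit caveat about interpreting ``power-law decaying correlation function'' as the normalized quantity with $\Theta(1)$ operator norms is a reasonable reading of the paper's informal definition and is the same convention used in its proof of Corollary~\ref{cor: RK}.
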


The rigorous bound $z \ge 2$ remains valid even in the presence of topological degeneracy.
For instance, the ground state of the RK point of the quantum dimer model in Eq.~\eqref{GS of RK dimer} can also be represented as PEPS.
This model exhibits both topological order and a critical ground state.

\bibliography{ref}

\end{document}